\documentclass[sn-mathphys-num]{sn-jnl}

\usepackage{graphicx}%
\usepackage{multirow}%
\usepackage{amsmath,amssymb,amsfonts}%
\usepackage{amsthm}%
\usepackage{mathrsfs}%
\usepackage[title]{appendix}%
\usepackage{xcolor}%
\usepackage{textcomp}%
\usepackage{manyfoot}%
\usepackage{booktabs}%
\usepackage{algorithm}%
\usepackage{algorithmicx}%
\usepackage{algpseudocode}%
\usepackage{listings}%

\theoremstyle{thmstyleone}%
\newtheorem{theorem}{Theorem}
\theoremstyle{thmstyletwo}%

\theoremstyle{thmstylethree}%

\usepackage[T1]{fontenc}
\usepackage{graphicx}
  \usepackage[misc]{ifsym}
  
\usepackage{mathtools}
\usepackage{oz}
\usepackage{enumerate}
\usepackage{array}

\newcommand*\rot{\rotatebox{90}}

\newcommand{\wpif}{wpif}

\newcommand{\goto}{\text{\sf goto}\;}

\newcommand{\ifc}{\text{\sf if}\;}
\newcommand{\elsec}{\;\text{\sf else}\;}
\newcommand{\return}{\text{\sf return}}
\newcommand{\flush}{\text{\sf flush}}
\newcommand{\wb}{\text{\sf wb}}
\newcommand{\eval}{\text{\sf eval}}
\newcommand{\mfence}{\text{\sf mfence}}
\newcommand{\RMW}{\text{\sf RMW}}

\newcommand{\LL}{{\cal L}}
\newcommand{\R}{{\cal R}}
\newcommand{\G}{{\cal G}}

\begin{document}

\title[Information flow security on persistent memory]{Information flow security on persistent memory}

\author{\fnm{Graeme} \sur{Smith}}

\affil{\orgdiv{School of Electrical Engineering and Computer Science}, \orgname{The University of Queensland}, 
\orgaddress{
\city{Brisbane}, 
\country{Australia}}}


\abstract{Persistent memory is a recently proposed memory paradigm that delivers many system-wide benefits, including improved runtime efficiency and the ability of programs to recover from power outages and system crashes. While recent research has investigated techniques for proving functional correctness of programs running on related architectures, this is not the case for the orthogonal concept of information flow security. In this paper, we provide an information flow logic for an unstructured language (i.e., with gotos rather than loops) modelling a simple assembly language. We apply this logic to x86 assembly using a notion of reordering interference freedom (rif) to reason about potential out-of-order propagation of instructions to memory. We then show how this same notion of rif can be used to similarly reason about information flow on persistent memory.}

\keywords{Information flow security, persistent memory, weak memory models}

\maketitle

\section{Introduction}
\label{sec:intro}

Traditionally, microprocessors use two tiers of memory: byte-addressable volatile memory (DRAM) for fast read and write access during program execution, and non-volatile hard, or solid-state, drive storage for longer-term retention of data. Unlike the data in volatile memory, that in non-volatile storage survives power shutdowns, including in the case of a system crash or power failure. However, accessing this data during program execution is less convenient, done at the level of blocks rather than bytes, and less efficient, being substantially slower than accessing volatile memory.

Persistent memory is a recently proposed memory paradigm that can be used as an additional layer between traditional volatile memory and non-volatile storage \cite{bal22}. Like volatile memory it allows fast, byte-addressable access. Unlike standard volatile memory, the data in persistent memory is preserved across power shutdowns, improving system performance on reboot. While there are systems where persistent memory can replace non-volatile storage completely, and also systems where persistent memory can replace volatile memory, we focus on its use as a layer between the traditional two tiers of memory. 


Writing programs to run on such persistent memory systems must be done with care. The order in which program writes are \emph{persisted}, i.e., committed to persistent memory from volatile memory, is largely under the control of the hardware based on, for example, the need for a cache line to be cleared. Therefore, unlike volatile memory which is simply repopulated from the hard/solid-state drive when a system is started, rebooting from persistent memory can leave the system in an inconsistent state. 

This has led to recent research on formal semantics and formal verification techniques for programs on persistent memory \cite{izr16,raa18,raa19,der19,raa20a,raa20b,bil20,cho21,der21,bil22a,bil22b,dos23}. These existing approaches focus on verifying functional correctness of such programs, and develop new verification techniques, e.g., based on buffering stores to memory \cite{raa20a,raa20b} or thread-observable views \cite{cho21,bil22a}. In contrast, in this paper we focus on information flow security on persistent memory, and build on the insight that the observable effects of persistent memory are comparable to those of hardware weak memory models \cite{sor11,sew10,alg14,col21}. Hence, we use an existing verification technique for reasoning about weak memory models, {\em reordering interference freedom (rif)\/} \cite{cou21,cou23}, to reason about effects of persistent memory. This technique
\begin{itemize}
    \item is applicable to all currently available processor memory models, and
    \item allows us to separately reason about
    \begin{itemize}
        \item the security of a program,
        \item the effects on security introduced by the processor's memory model, and
        \item the effects on security introduced by persistent memory.
    \end{itemize}
\end{itemize}
Not only does this simplify the overall reasoning, but enables reuse of reasoning about a program on a range of different processor architectures. It provides a notion of {\em relative security\/} which allows us to compare information flow in the presence and absence of particular architectural features (as in the work of Cheang et al. \cite{che19}). 

The paper makes two main contributions.

\begin{enumerate}
\item We provide an approach to information flow checking on persistent memory. To the best of our knowledge, this is the first such approach. 
\item  We demonstrate that the proof technique, reordering information freedom (rif), can be used to reason about persistent memory without having to develop specialised reasoning techniques.
\end{enumerate}

We begin in Section~\ref{sec:wpif} by providing an information flow logic for an unstructured programming language, i.e., one using gotos rather than higher level language constructs such as loops. We illustrate the use of the logic on a case study based on the Linux reader-writer mechanism seqlock \cite{bov05}. 
In Section~\ref{sec:rif}, we introduce rif and show how it can be used to reason about instruction reordering on x86 processors, and to additionally reason about effects due to inconsistent states possible with persistent memory. We demonstrate this reasoning using the case study of Section~\ref{sec:wpif}. A proof of the soundness of the approach is provided in Section~\ref{sec:sound}. In Section~\ref{sec:rel}, we provide an overview of related work before concluding in Section~\ref{sec:con}.

\section{Information flow logic}
\label{sec:wpif}

Our information flow logic is defined over an abstract syntax representing assembly language constructs. 
A procedure is defined as having an identifier and a set of basic blocks (including both a start block and one or more exit blocks). Each basic block comprises a (possibly empty) sequence of instructions followed by either a jump to another basic block in the procedure or, in the case of an exit block, a return. Jumps may be conditional allowing for two or more possible jump targets. The condition $b$ is in terms of registers and literal values only. 

\begin{sidebyside}
\[
procedure ::= id\!: block^+\\
block ::= id\!: instruction^*\, jump\]
\nextside
\[\hspace{-2mm}instruction ::= r := e | r := [x] | [x] := e\\
\hspace{-2mm}jump ::= \M \goto id | \ifc (b)\;  jump \elsec jump|\\
\return\O\]
\end{sidebyside}

There are three types of instructions: register updates $r := e$ where $r$ is a thread-local register name and $e$ is an expression in terms of registers and literals, loads $r := [x]$ where $x$ is the address of the data to be loaded into register $r$, and stores $[x] := e$. The address $x$ corresponds to a global, i.e., shared, program variable, or an offset from such a variable (representing an index into an array or other data structure). Each of these instructions is considered to be atomic in our abstract language.

Our logic builds on that of Winter et al.\ \cite{win21} which uses a combination of weakest precondition reasoning \cite{dij76,dij90} to check security proof obligations and rely/guarantee reasoning \cite{jon83,xu97} to take into account the effects of concurrently running threads. The memory is assumed to be partitioned in such a way that an attacker can only {\em directly\/}  access data labelled as $low$. Data labelled as $high$ is regarded as sensitive and it is the objective of the logic to detect any possibility of such data being directly or indirectly imparted to an attacker. 

We adapt that logic (developed for a simple while language) to our abstract assembly language syntax. Each memory location $x$ (corresponding to a program variable or part of a data structure) is given a security classification $\LL(x)$. The classification is either $high$ (meaning the location is not directly accessible by an attacker and can hold either $high$ or $low$ data) or $low$ (meaning the location is directly accessible by an attacker and hence should only hold $low$ data). The security level of the data itself is given by an auxiliary variable $\Gamma_x$ for a location $x$. Similarly, $\Gamma_r$ is used for the security level of the data held in a register $r$.

Given a Boolean lattice with values $high$ and $low$ such that $low \sqsubseteq high$ and $high \not\sqsubseteq low$, our logic checks that an instruction or jump carried out in a secure state, i.e., one where for all locations $x$, $\Gamma_x \sqsubseteq \LL(x)$, results in a secure state.\footnote{The approach can be extended to a general lattice as detailed in \cite{win21}.} As has been proved in Isabelle/HOL \cite{win21}, this amounts to verifying {\em noninterference\/} \cite{gog82}, i.e., that $low$ data is not influenced by $high$ data. The required check is made via proof obligations introduced by the weakest precondition rules (see below) for each instruction or jump.

Following Barnett and Leino \cite{bar05}, we apply weakest precondition reasoning to each basic block. The reasoning starts from a postcondition at each exit block, and conditions that need to hold to ensure noninterference are added by rules associated with each instruction and jump.

Given $id_{ok}$ denotes the weakest precondition of block $id$, the rules for jumps are as follows (where $Q$ is the poststate from which the rule is applied and $\wpif$ is our extension of standard weakest precondition reasoning). 

\[\wpif(\goto id, Q) ~=~ id_{ok}\\
\wpif(\ifc (b)\; j_1 \elsec j_2, Q) ~=~\M \Gamma_E(b) = low \land\\
(b \implies \wpif(j_1, Q)) \land (\neg b \implies \wpif(j_2, Q))\O\\
\wpif(\return, Q) ~=~ Q\]
where $\Gamma_E(b)$ evaluates to $high$ when any of the values that are used to express $b$ are $high$, and $low$ otherwise. The predicate $\Gamma_E(b) = low$ is a proof obligation ensuring that branching does not occur based on $high$ data: in concurrent programs, the value of $b$ can readily be deduced using timing attacks (even when the statement's branches do not change publicly accessible variables) \cite{mur18,smi19}. Consider, for example, the following procedure where $id_0$ is the start block and the basic block $id_1$ repeatedly performs the sequence of instructions $c$ until condition $b$ is false.  

\begin{sidebyside}[3]
\[id_0\!:\\
~~\M [x] := 0\\
\ifc (r=0)\; \goto id_1 \elsec \goto id_2\O\] 
\nextside
\[\t6 \M id_1\!:\\
~~\M c\\
\ifc (b)\; \goto id_1 \elsec \goto id_2\O\O\]
\nextside
\[\t9 \M id_2\!:\\
~~\M [x] := 1\\
\return\O\O\]
\end{sidebyside}

Assume that $r$ holds $high$ data. Even though the values stored to $[x]$ are independent of the value of $r$, a concurrent thread which simply loads $[x]$ will be able to deduce information about $r$ in some circumstances. For example, under a round robin scheduler with time slices less than the time it takes to execute $id_1$, loading the value~1 would indicate that $r \neq 0$. 

In circumstances where releasing {\em some\/} information about $high$ data is not problematic from a security perspective, the verification can include declassification of {\em what\/} expressions in terms of $high$ variables can be released, and {\em where\/} in the program this can occur \cite{sab09}. For example, when reasoning about the above procedure, we could declassify the expression $r = 0$ at line~2 of $id_0$ (and nowhere else). This expression is then treated as if it were $low$. Such an approach changes the property being proved to something slightly weaker, but ultimately more practical, than standard noninterference as detailed in \cite{ask07,smi22}.

For a register update $r:=e$, we have no additional proof obligation since registers are thread-local and hence no data is put into a location that is accessible by another thread. The rule differs from the standard weakest precondition rule for assignments by including an update to the auxiliary variable $\Gamma_r$ denoting the security level of the data in $r$.

\[\wpif(r:=e, Q) ~=~ Q[r, \Gamma_r \backslash e, \Gamma_E(e)]\]
where $Q[r, \Gamma_r \backslash e, \Gamma_E(e)]$ denotes predicate $Q$ with each occurrence of $r$ and $\Gamma_r$ replaced by $e$ and $\Gamma_E(e)$, respectively.

Load and store instructions add proof obligations that depend on globally accessible memory locations. These proof obligations can therefore be falsified by concurrently running threads. To allow for thread-local reasoning, we include an additional proof obligation that assures they are not falsified by any possible step, or sequence of steps, of threads in the environment. 

This potential environment behaviour is captured by a $rely$ predicate $\R$ \cite{jon83}. $\R$ is a two-state predicate over locations, i.e., global variables, describing the relation between the state before and after all possible environment steps. This relation is reflexive to capture that the environment may do nothing, and transitive to capture that the environment may perform several steps.

Each variable $v$ in the before state of $\R$ is denoted by $v'$ in the after state. Using $P'$ to refer to $P[\bar{v}\backslash \bar{v}']$ where $\bar{v}$ is the list of all variables in $P$ and $\bar{v}'$ their primed counterparts, a predicate $P$ is defined to be {\em stable\/}, i.e., non-falsifiable, under behaviour described by $\R$ as follows.

\[stable_\R(P) ~=~ \forall \bar{v}' \cdot P \land {\cal R}  \implies P'\]
A wellformedness condition of $\wpif$ requires that the specified precondition and postcondition of a procedure are stable. 

For a load $r:= [x]$, we require that $r$ and $\Gamma_r$ are updated, the latter to $\LL(x) \sqcap \Gamma_x$, i.e., the meet, or lowest value, of $\LL(x)$ and $\Gamma_x$. This ensures $\Gamma_r$ is no greater than $\LL(x)$ in the case where $\Gamma_x$'s value is not defined. This reflects that we are starting the execution of the load from a secure state. Additionally, we require that this proof obligation is stable.

\[\wpif(r := [x], Q) ~=~ Q[r, \Gamma_r\backslash [x], \LL(x) \sqcap \Gamma_x] \land stable_\R(Q[r, \Gamma_r\backslash [x], \LL(x) \sqcap \Gamma_x])\]

A store $[x] := e$ changes the data in the globally accessible location $x$. Hence, it is required that this change satisfies the rely predicates of all other threads in the environment. To support proving this, a thread has a $guarantee$ predicate $\G$ which stores must satisfy \cite{jon83}. Like $\R$, $\G$ is a two-state predicate over locations capturing a change of global state.

The rule for a store updates $[x]$ and 
$\Gamma_x$ and checks that the security value of $e$ is less than or equal to the security classification of $x$.
Additionally, it assures the thread's guarantee holds when the value of $[x]$ is updated to $e$, and that all proof obligations are stable.

\[\wpif([x] := e, Q) ~\M=~  Q[[x], \Gamma_x\backslash e, \Gamma_E(e)] \land \Gamma_E(e) \sqsubseteq \LL(x) \land \G[[x]'\backslash e][\bar{v}'\backslash \bar{v}]\land\\
\also
stable_\R(Q[[x], \Gamma_x\backslash e, \Gamma_E(e)] \land \Gamma_E(e) \sqsubseteq \LL(x) \land \G[[x]'\backslash e][\bar{v}'\backslash \bar{v}])\O\]
where $\bar{v}$ is the list of all variables in $\G$, i.e., $\G$ has all occurrences of $[x]'$ replaced by $e$ and all other primed variables $v'$ replaced by $v$ (capturing that the value of $v$ is unchanged). 

Finally, to reason over the sequence of instructions and jump in a basic block, we have the following rule (where $i_1$ and $i_2$ are sequences of instructions, $i_2$ optionally followed by a jump).

\[wpif(i_1 ~i_2, Q) ~=~ wpif(i_1, wpif(i_2, Q))\]

\subsection{Case study}
\label{sec:case}

To illustrate the use of the logic, we present a case study based on the Linux reader-writer mechanism seqlock \cite{bov05}. The case study provides a proof of concept that (i)~persistent memory can introduce new information flow problems, and (ii) that the proposed approach can find them. A deeper evaluation of the approach requires automated tool support which is left to future work. 

The seqlock mechanism allows reading of shared memory locations without the need for locking, thus supporting fast write access. We assume there is just one writer thread but many reader threads. When the writer thread wishes to write to the shared locations $x1$ and $x2$, it increments a counter $c$ whose initial value is even. It then proceeds to write to the locations, and finally increments $c$ again. The counter $c$ ensures the consistency of values read by other threads. The two increments of $c$ ensure that it is odd when a thread is writing to the locations, and even otherwise. Hence, when a thread wishes to read the locations, it waits in a loop until $c$ is even before reading them. Also, before returning it checks that the value of $c$ has not changed (i.e., another write has not begun). If it has changed, the values from $x1$ and $x2$ that have been read may not belong to the same write and hence the thread discards these values and starts over. 

To introduce a security element to our case study, we assume the location $x1$ may hold $high$ data, and that this is flagged via $x2$ which is 1 when the data at $x1$ is $high$ and 0 when it is $low$.

Assuming the values to be written to $x1$ and $x2$ are stored in registers $r1$ and $r2$, respectively, the $write$ procedure has just the following basic block.

\[wr_0\!:\\
~~\M r0 := [c]\\
[c] := r0 + 1\\
[x1] := r1\\
[x2] := r2\\
r0 := [c]\\
[c] := r0 + 1\\
\return\O\]

The $read$ procedure places the values at $x1$ and $x2$ in registers $r1$ and $r2$, respectively. Since registers are local to a thread, these registers are different to those of the writer thread and all other reader threads.

The procedure has four basic blocks. In the start block $rd_0$, the value at $c$ is read and, if it is even, the code jumps to basic block $rd_1$, otherwise it starts over. In $rd_1$, the value at $x2$ is read and if it is 1 the procedure jumps to basic block $rd_2$ where it clears the value in $r1$ and returns. The 1 in $r2$ indicates to the thread that called the procedure that the read has failed (due to the current value being $high$).\footnote{We assume there are additional {\em privileged\/} reader threads that could read the value at $x1$ in these cases.} Otherwise, the code jumps to basic block $rd_3$. This block reads the data at $x1$ then re-reads the value at $c$. If this latter value hasn't changed, then the values in $r1$ and $r2$ correspond to a single write and the procedure returns. In this case, the 0 in $r2$ indicates to the calling thread that the read was successful (and $r1$ contains the low data read from $x1$). If the value at $c$ has changed, the data cannot be returned and so the procedure starts over. 

\begin{sidebyside}
\[rd_0\!:\\
~~\M r0 := [c]\\
\ifc (r0 \mod 2 = 0)\; \goto rd_1 \elsec \goto rd_0\O\]
\nextside
\[\hspace{10mm} \M rd_1\!:\\
~~\M r2 := [x2]\\
\ifc (r2 = 1)\; \goto rd_2 \elsec \goto rd_3\O\O\]
\end{sidebyside}
\vspace*{-6mm}
\begin{sidebyside}
\[rd_2\!:\\
~~\M r1 := 0\\
\return\O\]
\nextside
\[\hspace{10mm}\M rd_3\!:\\
~~\M r1 := [x1]\\
r3 := [c]\\
\ifc (r3 = r0)\; \return \elsec \goto rd_0\O\O\]
\end{sidebyside}

To reason about information flow in this program, we let $\LL(c) = \LL(x2) = low$ as these locations do not hold sensitive data, and $\LL(x1) = high$ since it may hold $high$ information (when the value at $x2$ is 1).

The writer thread can rely on the values at $x1$, $x2$ and $c$ not being changed by other threads. It guarantees that the correspondence between the value at $x2$ and the security of data in $x1$ is maintained when the value at $c$ is even ($G\land G'$ below).  It also guarantees that it only ever increases the value at $c$, and does not change the values at $x1$ and $x2$ when the value at $c$ is even. 

\[\R_{wr} = ([x_1]' = [x_1] \land [x_2]'=[x_2] \land [c]'=[c])\\
\G_{wr} = G \land G' \land [c]' \geq [c] \land (c\mod 2 = 0 \implies [x1]'=[x1]\land[x2]'=[x2])\\
\mbox{where~} G = ([c]\mod 2=0 \implies ([x2]=1 \iff \Gamma_{x1} = high))\]

\begin{figure}[t]
\[\!\!\!\!wr_0\!:\\
\{\Gamma_{r2} = low \land [c]\mod 2 = 0 \land (r2=1 \iff \Gamma_{r1} = high) \land ([x2]=1\iff \Gamma_{x1}=high)\}\\
    r0 := [c]\\
\{\M \Gamma_{r0} = low \land \Gamma_{r2} = low \land r0\mod 2=0 \land (r2=1 \iff \Gamma_{r1} = high)\land r0+1 \geq [c]\land\\
([c] \mod 2 = 0 \implies ([x2]=1\iff \Gamma_{x1}=high))\O\}\\
    [c] := r0 + 1\\
\{\Gamma_{r2} = low \land ([c]+1)\mod 2=0 \land (r2=1 \iff \Gamma_{r1} = high)\}\\
    [x1] := r1\\
\{\Gamma_{r2} = low\land ([c]+1)\mod 2=0 \land (r2=1 \iff \Gamma_{x1} = high)\}\\
    [x2] := r2\\
\{([c]+1)\mod 2=0 \land ([x2]=1 \iff \Gamma_{x1} = high)\}\\
    r0 := [c]\\
\{\Gamma_{r0}=low \land (r0+1)\mod 2=0 \land ([x2]=1 \iff \Gamma_{x1} = high) \land r0+1 \geq [c]\}\\
    [c] := r0 + 1\\
\{[c] \mod 2 =0\land ([x2]=1 \iff \Gamma_{x1} = high)\}\]
\vspace{-4mm}
\caption{Calculation of weakest precondition of $write$ procedure.}
\label{fig:write}
\end{figure}

\begin{figure}[h!]
\[\!\!\!\!rd_0\!:\\
\{[c]\mod 2=0 \implies ([x2]\neq 1 \implies \Gamma_{x1}=low)\}\\
    r0 := [c]\\
\{\M \Gamma_{r0} = low \land \\
(r0\mod 2 = 0\implies\\
\t2 ([x2]\neq 1 \implies\M r0 \leq [c]\land ([c]=r0 \implies \Gamma_{x1} = low) \land\\
([c]\neq r0 \implies ([c]\mod 2= 0 \implies ([x2]\neq 1 \implies \Gamma_{x1}=low))))\O\land\\
(r0\mod 2\neq 0 \implies ([c]\mod 2=0 \implies ([x2]\neq 1 \implies \Gamma_{x1}=low)))\O\}\\
    \ifc (r0 \mod 2 = 0)\; \goto rd_1 \elsec \goto rd_0\\
\{true\}\]
\vspace*{-10mm}
\[\!\!\!\!rd_1\!:\\
\{([x2]\neq 1 \implies\M r0 \mod 2 = 0 \land r0 \leq [c]\land \Gamma_{r0} = low\land ([c]=r0 \implies \Gamma_{x1} = low) \land\\
([c]\neq r0 \implies ([c]\mod 2= 0 \implies ([x2]\neq 1 \implies \Gamma_{x1}=low)))\O\}\\
    r2:=[x2]\\
\{\M \Gamma_{r2} = low \land\\
(r2\neq 1 \implies\M r0 \mod 2 = 0 \land r0 \leq [c]\land \Gamma_{r0} = low\land ([c]=r0 \implies \Gamma_{x1} = low)\land\\
([c]\neq r0 \implies ([c]\mod 2= 0 \implies ([x2]\neq 1 \implies \Gamma_{x1}=low)))\O\O\}\\
    \ifc (r2=1)\; \goto rd_2 \elsec \goto rd_3\\
\{true\}\]
\vspace*{-10mm}
\[\!\!\!\!rd_2\!:\\
\{true\}\\
r1 := 0\\
\{\Gamma_{r1}=low\}\\
\return\\
\{\Gamma_{r1}=low\}\]
\vspace*{-10mm}
\[\!\!\!\!rd_3\!:\\
\{\M r0 \mod 2 = 0 \land r0 \leq [c]\land \Gamma_{r0} = low\land
([c]=r0 \implies \Gamma_{x1} = low) \land\\
([c]\neq r0 \implies ([c]\mod 2= 0 \implies ([x2]\neq 1 \implies \Gamma_{x1}=low)))\O\}\\
    r1 := [x1]\\
\{\M r0 \leq [c]\land \Gamma_{r0} = low\land
([c]=r0 \implies \Gamma_{r1} = low) \land\\
([c]\neq r0 \implies ([c]\mod 2= 0 \implies ([x2]\neq 1 \implies \Gamma_{x1}=low)))\O\}\\
    r3 := [c]\\
\{\M \Gamma_{r3}=low \land \Gamma_{r0} = low\land
(r3=r0 \implies \Gamma_{r1} = low) \land\\
(r3\neq r0 \implies ([c]\mod 2= 0 \implies ([x2]\neq 1 \implies \Gamma_{x1}=low)))\O\}\\
    \ifc (r3=r0)\; \return \elsec \goto rd_0\\
\{\Gamma_{r1}=low\}\]
\vspace{-4mm}
\caption{Calculation of weakest precondition of $read$ procedure.}
\label{fig:read}
\end{figure}

The application of the information flow logic to the $write$ procedure is given in Figure~\ref{fig:write} (where predicates at each step have been simplified). Starting with a postcondition that the value at $c$ is even and, hence, that $x2=1 \iff \Gamma_{x1}=high$, the proof applies the $\wpif$ rules to derive the weakest precondition to ensure noninterference and the thread's guarantee. Note that the stability conditions are trivially true at each step due to the environment threads not changing the values at $c$, $x1$ or $x2$. 

Since none of the predicates in the proof evaluates to false, the procedure is secure when started in a state satisfying the calculated weakest precondition; that is, in a state where $r2$ holds $low$ data, the value at $c$ is even, $r1$ holds $high$ data if, and only if, $r2=1$, and $[x1]$ holds $high$ data if, and only if, $[x2] = 1$. The latter condition ensures that the guarantee is met by the first store.

The rely and guarantee predicates for the $read$ procedure are the inverse of those for $write$, i.e., $\R_{rd} = \G_{wr}$ and $\G_{rd}=\R_{wr}$. This ensures compatibility with the writer thread (i.e., it ensures that the writer thread ensures what the reader threads rely on, and vice versa). Also, since the $read$ procedure only changes (thread-local) registers, each thread calling it is trivially compatible with other reader threads. 

The information flow logic is applied to the $read$ procedure in Figure~\ref{fig:read} (where again predicates have been simplified at each step). Starting with a postconditon that $r1$ holds $low$ information (and hence there is no information leak), the weakest precondition of the start block $rd_0$ is calculated to be the predicate $[c]\mod 2= 0 \implies ([x2]\neq1\implies \Gamma_{x1}=low)$ which is implied by the predicate $G$ maintained by the writer thread. Note that due to cycles between the basic blocks, a process akin to deriving loop invariants needs to be used to calculate the basic block's preconditions. 

At each step, the predicates referring to locations which can be modified by the writer are stable as follows. The predicate $[c]\mod 2= 0 \implies ([x2]\neq 1 \implies \Gamma_{x1}=low)$ is implied by predicate $G$ which is maintained by the writer thread. Similarly, each predicate with $[c]\mod 2= 0 \implies ([x2]\neq 1 \implies \Gamma_{x1}=low)$ as the consequent of an implication is stable. 

The stability of predicate $[c]=r0 \implies \Gamma_{r1} = low$ reduces to $r0 \leq [c]$: since the writer guarantees $[c]' \geq [c]$, the antecedent of the predicate can never go from $false$ to $true$ when $r0 \leq [c]$ and hence the predicate cannot be falsified. Similarly, the stability of the predicate $[c]=r0 \implies \Gamma_{x1} = low$ reduces to $r0 \leq [c] \land r0 \mod 2 = 0$: again $r0 \leq [c]$ ensures that the antecedent does not go from $false$ to $true$, and $r0\mod 2 = 0$ ensures $[c]\mod 2 = 0$ when the antecedent is true and hence that the writer guarantees $x1$ is not changed. 

Again since none of the predicates in the proof evaluate to $false$, the procedure is secure when started from a state satisfying its weakest precondition. The weakest precondition of $read$ requires that if the value at $c$ is even then if $[x2] \neq 1$, $x1$ contains $low$ data (since it is this data which will be returned in $r1$).
 
\section{Reordering interference freedom (rif)}
\label{sec:rif}

The analysis of the seqlock case study in Section~\ref{sec:case}, did not take into account the memory model of the processor on which the code is running, nor the effect of using persistent memory in the case of an unplanned power outage. 

Recently, custom approaches have been developed to enable such reasoning \cite{raa20b,bil22a}. Our approach, however, is to use the rely/guarantee-based analysis already presented and to augment it with side-conditions to identify new vulnerabilities introduced by given architectural features. 

We use the notion of {\em reordering interference freedom\/}, or $rif$ for short, introduced by Coughlin et al.\ \cite{cou21,cou23} to reason about the effects of processor weak memory models. As we show in this section, it can also be used to reason about the effects of persistent memory. 

\subsection{Definition of rif}
\label{sec:defn}

In their work on operational semantics for hardware weak memory models, Colvin and Smith \cite{col18a,col18b} show that the effects of all {\em multicopy atomic memory models\/}, i.e., those in which all processor cores see writes to memory at the same time, can be captured in terms of an instruction reordering relation $R$. This is validated by comparing simulations of the semantics against actual hardware using the widely accepted range of litmus tests of Alglave et al.\ \cite{alg14}.

$R$ captures the reordering of memory accesses, i.e., stores and loads, allowed by a given memory model to improve efficiency. Such reordering is required to maintain a thread's in-order semantics and hence has certain restrictions (detailed in \cite{col18a,col18b}). In general, two instructions cannot reorder when there is a dependency between them. For example, the program $r:=0~\, [x]:=r$ cannot be reordered to $[x]:=r~\, r:= 0$ due to the store depending on the update to $r$. Importantly, reordered stores must write the values they would have written if executed in order. This can be assured by rewriting register updates/loads in dynamic single assignment (DSA) form, i.e., where each register variable is assigned only once in an execution \cite{van05}. For example, $r:= 0~\, [x]:=r~\, r:= 1$ would be rewritten as $r_0:=0~\, [x]:=r_0~\, r_1:=1$ ensuring that location $x$ is updated with value 0 even if the final register update is reordered before it. We assume $R$ does not allow non-memory instructions, i.e., register updates and jumps, to reorder with each other; such reorderings, while possible on a pipelined architecture, are inconsequential for the ordering of memory accesses.

In the reordering semantics, the rule for sequential composition is split into two cases: standard in-order execution, and reordered execution according to any reordering of instructions allowed by $R$.

\begin{sidebyside}[4]
\nextside
\[~\also\alpha~ c ~\stackrel{\alpha}{\longrightarrow}~ c\]
\nextside
\begin{infrule}
c ~\stackrel{\beta}{\longrightarrow}~c' \t3 (\alpha, \beta) \mem R
\derive 
\t2~~~\;\alpha~ c ~\stackrel{\beta_{\lseq \alpha\rseq}}{\relbar\joinrel\longrightarrow}~\alpha~ c'
\end{infrule}
\nextside
\end{sidebyside}

The rule for in-order execution (left-hand rule above) allows the next instruction $\alpha$ in a program $\alpha~ c$ to execute leaving the program $c$. The reordered execution (right-hand rule above) allows an instruction $\beta_{\lseq \alpha\rseq}$, which is not the next instruction, to execute transforming program $\alpha~ c$ to $\alpha~ c'$. This is possible whenever $c$ can execute $\beta$ to become $c'$ and $\alpha$ and $\beta$ are reorderable according to $R$. In this rule, $\beta_{\lseq \alpha\rseq}$ denotes a modification of $\beta$ which takes into account that it has been reordered with $\alpha$. For example, the code $[x] := 1~~ r := [x]$ can be reordered to $r := 1~~ [x] := 1$ on the Intel/AMD memory model x86 \cite{sew10}, where the load $r:= [x]$ has been modified to $r := 1$. Such modification is referred to as {\em forwarding\/} (or {\em bypassing\/}) and occurs to maintain the sequential semantics of the program \cite{sor11}. Formally, $\beta_{\lseq \alpha\rseq}$ is defined as follows.

\[\beta_{\lseq\alpha\rseq} =
\begin{cases}
      r:=e, & \text{if}\ \alpha=([x]:=e) \text{~and~} \beta=(r:=[x]) \text{~for some $x$, $r$ and $e$}\\
      \beta, & \text{otherwise}
\end{cases}\]


Note that the reordering rule can be applied successively to reorder, for example, $\alpha~\beta~\gamma~c$ to $\gamma~\alpha~\beta~c$ when $(\beta,\gamma)\mem R$ and $(\alpha, \gamma) \mem R$ (and no forwarding is required): first $\beta~\gamma~c$ is reordered to $\gamma~\beta~c$ then $\alpha~\gamma~\beta~c$ is reordered to $\gamma~\alpha~\beta~c$.

The semantics of other language constructs are not affected by $R$ and hence have the standard semantics as shown for a given procedure in Figure~\ref{fig:sem}. Note~that the $\return$ statement terminates the procedure and, as such, has no behaviour.

Based on the above semantics, the idea of rif is simple. Essentially the approach checks, for every pair of {\em reorderable instructions\/} of a program, i.e., instructions that can occur in the reverse order to the order they appear in the program, that executing the instructions in the reverse order does not introduce new behaviour. This check is defined for a verification approach based on rely/guarantee reasoning and includes ensuring that

\begin{enumerate}[(i)]
\item interference from other threads is taken into account via the program's rely condition,
\item the program's guarantee is maintained under the reordering, and 
\item other instructions reordering before or after the reorderable pair are taken into account by performing the check for the execution of the instructions with respect to an arbitrary pre- and post-state.
\end{enumerate}

\begin{figure}[t!]
\begin{sidebyside}[4]
\begin{infrule}
\hspace{-3mm}id\!: c ~\mem~ blocks\hspace{-3mm}
\derive
\hspace{-3mm}\goto id \stackrel{[true]}{\relbar\joinrel\relbar\joinrel\longrightarrow} c\hspace{-3mm}
\end{infrule}
\nextside
\begin{infrule}
\t3 \eval(\sigma,b)
\derive
\hspace{-3mm}\ifc(b)\; j_1 \elsec j2 \stackrel{[b]}{\longrightarrow} j1\hspace{-3mm}
\end{infrule}
\nextside
\begin{infrule}
\t1~~ \neg \eval(\sigma, b)
\derive
\hspace{-3mm}\ifc(b)\; j_1 \elsec j2 \stackrel{[\neg b]}{\relbar\joinrel\longrightarrow} j2\hspace{-3mm}
\end{infrule}
\nextside
\vspace*{-1.5mm}
\begin{infrule}
\hspace{-3mm}~~c ~\stackrel{[b]}{\longrightarrow}~ c'\hspace{-3mm}
\derive
\hspace{-3mm}\sigma, c \longrightarrow \sigma, c'\hspace{-3mm}
\end{infrule}
\end{sidebyside}
\begin{sidebyside}[3]
\begin{infrule}
\also
\t1~~~~\; c ~\stackrel{r:=e}{\relbar\joinrel\relbar\joinrel\longrightarrow}~ c'
\derive
\hspace{-3mm}\sigma, c \longrightarrow \sigma[r\mapsto {\sf eval}(\sigma,e)], c'\hspace{-3mm}
\end{infrule}
\nextside
\begin{infrule}
\t2 c ~\stackrel{r:=[x]}{\relbar\joinrel\relbar\joinrel\longrightarrow}~ c'
\derive
\hspace{-3mm}\sigma, c \longrightarrow \sigma[r\mapsto \sigma(x)], c'\hspace{-3mm}
\end{infrule}
\nextside
\begin{infrule}
\t1~~~~ c ~\stackrel{[x]:=e}{\relbar\joinrel\relbar\joinrel\longrightarrow}~ c'
\derive
\hspace{-3mm}\sigma, c \longrightarrow \sigma[x\mapsto {\sf eval}(\sigma,e)], c'\hspace{-3mm}
\also
\end{infrule}
\end{sidebyside}
For a given procedure, $blocks$ is the set of basic blocks of the procedure, and $\sigma$ is the state comprising values at global memory locations and in registers local to the thread calling the procedure ($\sigma[l \mapsto v]$ updates location/register $l$ with value $v$ and ${\sf eval}(\sigma,e)$ is the evaluation of expression $e$ in $\sigma$).
\vspace*{4mm}
\caption{Operational semantics of a procedure.}
\label{fig:sem}
\end{figure}

For the logic in Section~\ref{sec:wpif}, we check the following for each pair of reorderable instructions $(\alpha, \beta)$ for a program with rely $\R$ and guarantee $\G$.

\[\t1 rif(\alpha,\beta, \R, \G)  ~\sdef~  \all Q\cdot \wpif(\alpha~ \beta, Q) \implies \wpif(\beta_{\lseq\alpha\rseq}~ \alpha, Q) & (1)\]
That is, for every state from which the execution of $\alpha~ \beta$ will reach an arbitrary state $Q$, the execution of $\beta_{\lseq\alpha\rseq}~\alpha$ will also reach $Q$. The use of $\wpif$ ensures that the program's rely and guarantee are taken into account (conditions (i) and (ii) above), and the universal quantification over $Q$ over-approximates the post-states the executions result in (ensuring condition (iii)).

The over-approximation ensures the soundness of rif, i.e., that it captures all behaviours possible under the given reordering, as has been proved in Isabelle/HOL~\cite{cou21}. While false positives are possible, they are not common (most allowed reorderings are benign in any context) and are readily dealt with using strategies detailed in \cite{cou21}.

The over-approximation also dramatically reduces the complexity of reasoning in the presence of reordering. For a thread with $n$ instructions, the worst case is that every instruction can reorder giving us $n(n-1)/2$ reorderable pairs (significantly less than the $n!$ execution traces that such reordering would introduce). Note also that this worst case is extremely unlikely. Weak memory models preserve sequential semantics and hence, apart from cases of forwarding, instructions which refer to the same variable are not reorderable.

The rif approach is also readily automated. Pairs of reorderable instructions of a program can be found via a dataflow analysis, similar to dependence analysis commonly used in compiler optimisation \cite{kil73}. This approach has been implemented for both a simple while language and an abstraction of ARMv8 assembly code \cite{cou21}.

The operational semantics and definition of rif above is for multicopy atomic processors. These include widely used x86 and ARMv8 processors as well as those based on the open-source RISC-V architecture.
An extension of the semantics has been developed for non-multicopy atomic processors such as IBM POWER and ARMv7 \cite{col18a,col18b}. The interested reader is referred to \cite{cou23} for a corresponding definition of rif which works for these processors. 

\subsection{rif on the x86 weak memory model}
\label{sec:tso}

To illustrate the use of rif, we apply it to analyse information flow security of our seqlock case study when running on an x86 processor. On x86 (also referred to as Total Store Order (TSO)), stores occur in program order but may be delayed with respect to loads, register updates and jumps. In the case of loads to the same memory location as the store, forwarding occurs (as discussed in Section~\ref{sec:defn}). These rules are summarised in Table~\ref{tab:x86} where X indicates no reordering is possible, F indicates reordering is possible and that forwarding applies in the case that the store and load refer to the same location, and $\checkmark$ indicates reordering is possible. Instruction~1 is the earlier instruction and Instruction~2 the later. For greater completeness, we have included atomic read-modify-write instructions such as compare-and-swap and fetch-and-add ($\RMW$) and a general fence instruction ($\mfence$) used to prevent unwanted reordering.\footnote{x86 also supports a store fence (sfence) and  a load fence (lfence) which we omit here.} The operational semantics and $\wpif$ rules for these constructs are provided below. We model a general $\RMW$ instruction as $\RMW(b,x,e_1,e_2)$ which updates the location $x$ to $e_2$ when $b$ holds. For a fetch-and-add with parameters $x$ and $e_1$, $b$ is true and $e_2 = \sigma([x])+e_1$. For a compare-and-swap with parameters $x$, $e_1$ and $e_2$, $b$ is $\sigma([x])=e_1$.

\begin{table}[t]
    \begin{tabular}{ccccccc}\toprule
           & \multicolumn{6}{c}{~~~~~~~Instruction 2} \\[.5ex] 
          && ~~load~~  & ~~store~~ & ~~RMW~~ & ~~mfence~~ & ~~other~~ \\[.5ex]
          & load		& X & X & X & X & X\\
          & store 	& F & X & X & X & \checkmark \\
          & RMW 	& X & X & X & X & X \\
            \rot{\rlap{\hspace{-4mm}Instruction 1}}
          & mfence	& X & X & X & X & X \\
          & other  & X & X & X & X & X\\ \bottomrule
    \end{tabular}
\caption{Instruction reordering on x86 (from \cite{sor11}, extended with the category ``other'' for register updates and jumps).}
\label{tab:x86}
\end{table}

\begin{sidebyside}
\begin{infrule}
c \stackrel{\mfence}{\relbar\joinrel\relbar\joinrel\relbar\joinrel\longrightarrow} c'
\derive
\sigma, c \longrightarrow \sigma, c'
\end{infrule}
\nextside
\begin{infrule}
\t1~ {\sf eval}(\sigma, b) \t2 c \stackrel{\RMW(b,x,e_1,e_2)}{\relbar\joinrel\relbar\joinrel\relbar\joinrel\relbar\joinrel\relbar\joinrel\relbar\joinrel\relbar\joinrel\relbar\joinrel\longrightarrow} c'
\derive
\sigma, c \longrightarrow \sigma[x\mapsto {\sf eval}(\sigma, e_2), r \mapsto \sigma(x)], c'
\end{infrule}
\end{sidebyside}
\begin{sidebyside}
\begin{infrule}
\neg{\sf eval}(\sigma, b) \t2 c \stackrel{\RMW(b,x,e_1,e_2)}{\relbar\joinrel\relbar\joinrel\relbar\joinrel\relbar\joinrel\relbar\joinrel\relbar\joinrel\relbar\joinrel\relbar\joinrel\longrightarrow} c'
\derive
\t1~~~~ \sigma, c \longrightarrow \sigma[r\mapsto \sigma(x)], c'
\end{infrule}
\nextside
\vspace*{1.5mm}
\hspace*{6mm} \parbox{55mm}{\small where $r$ is the x86 register into which $[x]$ is loaded during a read-write-modify instruction.}
\end{sidebyside}

\[\wpif(\mfence, Q) = Q\\
\Also
\wpif(\RMW(b,x,e_1,e_2), Q) = \Gamma_E(b) = low \land (b \implies \wpif([x]:=e_2,Q)) \land (\neg b \implies Q)\]
where $r$ is not free in $Q$ nor $e_2$ (and hence loading $[x]$ to it has no effect on the calculated weakest precondition). Note that neither the operational semantics nor $\wpif$ rules take into account the ordering constraints introduced by these instructions, as these will be handled by rif.

Table~\ref{tab:x86} represents the reordering relation $R$ where for each $(\alpha,\beta)\mem R$, $\alpha$ is a store $[x]:=e_1$ and $\beta$ is either a 
load $r:=[y]$, register update $r:=e_2$ or jump. 

%

Recalling the $write$ procedure of seqlock from Section~\ref{sec:case}, three pairs of reorderable instructions can be found. The second load $r0 := [c]$ can be reordered with each of the proceedings stores. 


We apply the rif check (1) to the first pair of instructions $[x2]:= r2$ and $r0 := [c]$ below. Noting that $write$'s rely condition ensures values at $c$, $x1$ and $x2$ do not change, we have the following (where $r$, $r_0$ and $r_1$ are the DSA variables corresponding to the initial value, first and second assignments to a register $r$, respectively).

\[~~~\wpif([x2] := r2~~ r0_1:=[c], Q)~ \\
= \wpif([x2]:=r2, \wpif(r0_1:=[c],Q)) & (sequential composition)\\
= \wpif([x2]:=r2,Q[r0_1,\Gamma_{r0_1}\backslash [c],\Gamma_{c}]) & (load)\\
= \M Q[r0_1,\Gamma_{r0_1}\backslash [c],\Gamma_{c}]\,[[x2], \Gamma_{x2}\backslash r2, \Gamma_{r2}] \land\\
([c]\mod 2 = 0 \implies [x2] = r2 \land ([x2]=1\iff\Gamma_{x1}=high)) & (store)\O\\
\also
\zbreak
~~~\wpif(r0_1:=[c]~~ [x2] := r2, Q)\\
= \wpif(r0_1:=[c], \wpif([x2] := r2,Q)) & (sequential composition)\\
=  \wpif(\M r0_1:=[c],Q[[x2],\Gamma_{x2}\backslash r2,\Gamma_{r2}]\land\\
([c]\mod 2 = 0 \implies [x2] = r2 \land ([x2]=1\iff\Gamma_{x1}=high)))\O & (store)\\
= \M Q[[x2], \Gamma_{x2}\backslash r2, \Gamma_{r2}]]\,[r0_1,\Gamma_{r0_1}\backslash [c],\Gamma_{c}]\land\\
([c]\mod 2 = 0 \implies [x2] = r2 \land ([x2]=1\iff\Gamma_{x1}=high)) & (load)\O\]

Since neither updated variable depends on the other, the order of substitution does not matter. Hence, $\wpif([x2] := r2~\, r0_1:=[c], Q) = \wpif(r0_1:=[c]~\, [x2] := r2, Q)$ and (1) holds. From this we deduce that no new behaviour is introduced by the reordering and the results of the analysis of Section~\ref{sec:case}, that seqlock is secure, still holds. The rif check for the reorderable pair $[x1]:=r1$ and $r0:=[c]$ similarly holds. 

Forwarding is required for the pair $[c]:=r0+1$ and $r0:=[c]$ since the store and load refer to the same location $c$. The rif check is as follows.

\[~~~\wpif([c] := r0_0+1~~ r0_1:=[c], Q)~ \\
= \wpif([c]:=r0_0+1, \wpif(r0_1:=[c],Q)) & (sequential composition)\\
= \wpif([c]:=r0_0+1,Q[r0_1,\Gamma_{r0_1}\backslash [c],\Gamma_{c}]) & (load)\\
= \M Q[r0_1,\Gamma_{r0_1}\backslash [c],\Gamma_{c}]\,[[c], \Gamma_{c}\backslash r0_0+1, \Gamma_{r0_0}] \land r0_0+1 \geq [c] \land \\
([c]\mod 2 = 0 \implies ([x2]=1\iff\Gamma_{x1}=high)) \land\\
((r0_0+1)\mod 2 = 0 \implies ([x2]=1\iff\Gamma_{x1}=high))& (store)\O\\
\also
\zbreak
~~~\wpif(r0_1:=r0_0+1~~ [c] := r0_0+1, Q) & (forwarding applied)\\
= \wpif(r0_1:=r0_0+1, \wpif([c] := r0_0+1,Q)) & (sequential composition)\\
=  \wpif(\M r0_1:=r0_0+1,Q[[c],\Gamma_{c}\backslash r0_0+1,\Gamma_{r0_0}]\land r0_0+1 \geq [c] \land\\
([c]\mod 2 = 0 \implies ([x2]=1\iff\Gamma_{x1}=high))\land\\
((r0_0 + 1) \mod 2 = 0 \implies ([x2]=1\iff\Gamma_{x1}=high)))\O & (store)\\
= \M Q[[c], \Gamma_{c}\backslash r0_0+1, \Gamma_{r0_0}]]\,[r0_1,\Gamma_{r0_1}\backslash r0_0+1,\Gamma_{r0_0}]\land r0_0+1 \geq [c] \land\\
([c]\mod 2 = 0 \implies ([x2]=1\iff\Gamma_{x1}=high))\land\\
((r0_0+1)\mod 2 = 0 \implies ([x2]=1\iff\Gamma_{x1}=high)))& (load)\O\]

Again the order of substitutions does not matter. In this case, for both orders $[c]$ and $r0_1$ become $r0_0+1$, and $\Gamma_c$ and $\Gamma_{r0_1}$ become $\Gamma_{r0_0}$. Hence, no new behaviour is introduced. Since there are no further reorderable pairs in $write$ and no reorderable pairs in $read$ (since it has no stores), it follows that seqlock is secure when executed on x86. 

\subsection{rif on persistent memory}
\label{sec:nvm}

A key problem with programming for persistent memory is ensuring a consistent state after a power outage. The problem arises due to the way stores are propagated from volatile to persistent memory. This is based on cache lines in the sense that stores on the same cache line are {\em persisted\/}, i.e., committed to persistent memory, together. More importantly, stores on different cache lines are committed at different times (determind by the hardware) which may not correspond to the order in which these stores occur in the program.

For seqlock, when a power outage occurs, we would like to restart in a state where reader threads can immediately begin calling the $read$ procedure. This will be true if the predicate $G=([c]\mod 2=0 \implies ([x2]=1\iff \Gamma_{x1}=high)$ holds, since $G$ implies the weakest precondition of $read$. It is not possible to also guarantee the weakest precondition of $write$ as this requires $[c]$ to be even, for example (and it is possible for a crash to occur when the latest persisted value at $c$ is odd). Hence, a start-up/recovery routine would need to be run before $write$ is called.

Our earlier analysis shows that, under normal execution, $G$ always holds. However, imagine 
that stores to $x1$, $x2$ and $c$ are placed on different cache lines. Given a write of a $high$ value to $x1$ after a previous write of a low value, $[x1]$ and the final (even) value at $c$ may be persisted, but not yet the value at $x2$ which indicates that the value in $x1$ is high. If at this time the computer crashes and needs to be restarted, the system will be in a state where $[c]$ is even and $[x2]$ is 0 and the value at $x1$ is high. That is, $G$ does not hold and a reader thread will be able to access $[x1]$ despite it being $high$. 

To help programmers avoid such problems, Intel-x86 persistent memory provides instructions $\flush$ and $\flush_{opt}$ for forcing cache lines to be committed \cite{raa20a}.\footnote{There is also an instruction $\wb$ (write back) which we do not consider: semantically, it behaves like $\flush_{opt}$ but performs better in some circumstances \cite{raa20a}.} 
The instruction $\flush ~x$ persists all pending stores to locations on the same cache line as $x$. The instruction $\flush_{opt} ~x$ also persists all pending stores to locations on the same cache line as $x$, but optimises performance by not necessarily executing at the point in the program where it occurs. It must occur after all earlier stores to locations on $x$'s cache line, but may occur after later stores and before earlier stores to locations on different cache lines.

These rules, as well as others related to the order of $\flush$ and $\flush_{opt}$ instructions are summarised in Table~\ref{tab:pm} where X indicates no reordering is possible, F indicates reordering is possible, however forwarding may need to be taken into account, CL indicates reordering is possible when the instructions refer to different cache lines, and $\checkmark$ denotes reordering is always possible.\footnote{The table corresponds to Intel's intended behaviour as opposed to the behaviour described in the under-specified manual text. See \cite{raa20a} for details.}

\begin{table}[t]
    \begin{tabular}{ccccccccc}\toprule
         & \multicolumn{8}{c}{~~~~~~~Instruction 2} \\[.5ex] 
                   && load & ~~store~~ & ~~RMW~~ & ~~mfence~~ & ~~\flush~~ & ~~\flush$_{opt}$~~ & ~~other~~\\[.5ex]
          & load		& X & X & X & X & X & X & X\\
          & store 	& F & X & X & X & X & CL & $\checkmark$\\
          & RMW 	& X & X & X & X & X & X & X\\
          & mfence	& X & X & X & X & X & X & X\\ 
                      \rot{\rlap{\hspace{-4mm}Instruction 1}}
          & flush	& $\checkmark$ & X & X & X & X & CL & \checkmark\\
          & flush$_{opt}$	& $\checkmark$ & $\checkmark$ & X & X & CL & $\checkmark$ & $\checkmark$\\
          & other & X &X & X & X & X & X & X\\ \bottomrule
    \end{tabular}
\caption{Instruction reordering on Intel-x86 persistent memory (from \cite{raa20a}, extended with the category ``other'' for register updates and jumps).}
\label{tab:pm}
\end{table}

The semantics for $\flush$ and $\flush_{opt}$ are similar to the $\mfence$ semantics; again we will use rif to handle ordering constraints.

\begin{sidebyside}[4]
\begin{infrule}
c \stackrel{\flush~x}{\relbar\joinrel\relbar\joinrel\relbar\joinrel\longrightarrow} c'
\derive
\sigma, c \longrightarrow \sigma, c'
\end{infrule}
\nextside
\begin{infrule}
c \stackrel{\flush_{opt}~x}{\relbar\joinrel\relbar\joinrel\relbar\joinrel\relbar\joinrel\relbar\joinrel\longrightarrow} c'
\derive
\t1 \sigma, c \longrightarrow \sigma, c'
\end{infrule}\nextside
\end{sidebyside}
\[\wpif(\flush~x, Q) = \wpif(\flush_{opt}~x, Q) = Q\]

To reason about behaviour over power outages, we extend the reordering relation captured in Table~\ref{tab:pm} to also allow stores to different locations to reorder. This is based on the observation that the state after a crash may have only the second of two successive stores in a program persisted. This is equivalent to a state reachable in the program when the stores can be reordered. Hence, to check that a program $c$ is secure in the presence of possible crashes, we need to prove it secure when any combination of stores to different locations can be reordered. That is, we need to check $rif(\alpha,\beta,\R,\G)$ (see (1)) for each pair of stores $(\alpha,\beta)$ to different locations in $c$. 

If the rif check succeeds, we have the following.

\[\all Q\cdot wpif(\alpha~\beta, Q) \implies \wpif(\beta~\alpha,Q)\]
That is, for all states from which the execution of $\alpha~\beta$ is secure (i.e., all instruction proof obligations hold, the guarantee is maintained and all states are stable), the execution of $\beta~\alpha$ is also secure. It follows that the execution of $\beta$ alone from such a state is secure (since during the execution of $\beta~\alpha$ concurrent threads will be able to observe the execution of $\beta$ and have access to the state immediately after it executes). Hence, if a power outage occurs when only $\beta$ is persisted, the state will be secure. 
Table~\ref{tab:out} is identical to Table~\ref{tab:pm} except for the cell for the ordering of a pair of stores, where L indicates reordering is possible when the instructions refer to different locations. 

\begin{table}[b]
    \begin{tabular}{ccccccccc}\toprule
       & \multicolumn{8}{c}{~~~~~~~Instruction 2} \\[.5ex] 
          && load & ~~store~~ & ~~RMW~~ & ~~mfence~~ & ~~\flush~~ & ~~\flush$_{opt}$~~ & ~~other~~\\[.5ex]
          & load	& X & X & X & X & X & X & X\\
          & store 	& F & L & X & X & X & CL & $\checkmark$\\
          & RMW 	& X & X & X & X & X & X& X\\
          & mfence	& X & X & X & X & X & X& X\\ 
                      \rot{\rlap{\hspace*{-4mm}Instruction 1}}
          & \flush	& $\checkmark$ & X & X & X & X & CL & $\checkmark$\\
          & \flush$_{opt}$	& $\checkmark$ & $\checkmark$ & X & X & CL & $\checkmark$ & $\checkmark$\\ 
          & other & X &X & X & X & X & X & X\\ \bottomrule
    \end{tabular}
\caption{Instruction reordering for reasoning with rif on Intel-x86 persistent memory.}
\label{tab:out}
\end{table}

The proposed approach is proved sound with respect to an existing approach of Raad et al.\ \cite{raa20a} in Section~\ref{sec:sound}. Here we illustrate it on the seqlock case study. 
For the $write$ procedure, 
we now have an additional five reorderable pairs of instructions: the first store to $c$ can reorder with the stores to $x1$ and $x2$, the store to $x1$ can reorder with the store to $x2$ and the second store to $c$, and the store to $x2$ can reorder with the second store to $c$.


Each of these pairs needs to be checked using (1) to determine if their reordering can introduce new behaviour. For the first occurrence of $[c]:= r0+1$ and $[x1]:= r1$, we have the following.

\[~~~\wpif([c] := r0_0+1~~ [x1]:=r1, Q)\\
= \wpif([c]:=r0_0+1, \wpif([x1]:=r1,Q)) & (sequential composition)\\
= \M \wpif(\M [c]:=r0_0+1,Q[[x1],\Gamma_{x1}\backslash r1,\Gamma_{r1}] \land\\
([c]\mod 2 = 0 \implies [x1] = r1 \land ([x2]=1\iff\Gamma_{x1}=high)))\O\O & (store)\\
= \M Q[[x1],\Gamma_{x1}\backslash r1,\Gamma_{r1}]\,[[c], \Gamma_{c}\backslash r0_0+1, \Gamma_{r0_0}]\land\\
(r0_0+1\mod 2 = 0 \implies [x1] = r1 \land ([x2]=1\iff\Gamma_{x1}=high)) \land\\
r0_0 + 1 \geq [c] \wedge \Gamma_{r0_0}=low \wedge ([c]\mod 2 = 0 \!\implies\! ([x2]=1\!\iff\! \Gamma_{x1}=high))\O &(store)\\
\zbreak
~\\
~~~\wpif([x1]:=r1~~ [c] := r0_0+1, Q)\\
= \wpif([x1]:=r1, \wpif([c] := r0_0+1,Q)) & (sequential composition)\\
= \M \wpif([\M x1]:=r1,Q[[c],\Gamma_{c}\backslash r0_0+1,\Gamma_{r0_0}] \land r0_0+1 \geq [c]\land \Gamma_{r0_0}=low\land\\
([c]\mod 2 = 0 \implies ([x2]=1\iff \Gamma_{x1}=high)) \land\\
(r0_0+1\mod 2 = 0 \implies ([x2]=1\iff \Gamma_{x1}=high))) \O\O & (store)\\
= \M Q[[c], \Gamma_{c}\backslash r0_0+1, \Gamma_{r0_0}]]\,[[x1],\Gamma_{x1}\backslash r1,\Gamma_{r1}]\land r0_0+1 \geq [c]\land\Gamma_{r0_0}=low\land\\
([c]\mod 2 = 0 \implies ([x2]=1\iff \Gamma_{r1}=high)) \land\\
(r0_0+1\mod 2 = 0 \implies ([x2]=1\iff \Gamma_{r1}=high))) \land\\
([c]\mod 2 = 0 \implies [x1] = r1 \land ([x2]=1\iff\Gamma_{x1}=high)))\O & (store)\]

Recall that the guarantee of $write$ maintains $G = ([c] \mod 2=0 \implies ([x2]=1$ $\iff \Gamma_{x1}=high)$. The weakest precondition for the in-order execution requires, in addition to the substitutions in $Q$, that if $r0_0+1$ (the value at $c$ after the first store) is even, then $[x1]=r1$ to ensure that the second store maintains $[x1]'=[x1]$, and $[x2]=1\iff \Gamma_{x1}=high$ to ensure that $G$ holds before it. Additionally, it requires that the first store does not decrease the value at $c$ and that $G$ holds before it.

The weakest precondition of the reordered execution is similar but, since the store to $x1$ occurs first, requires that $[x1]=r1$ based on the initial value at $c$ not the value $r0_0+1$.

Hence, the reordered weakest precondition is not implied by the in-order one, and rif fails. This failure highlights the security issue that results from persisting the store to $x1$ before that to $c$: if $r1$ is $high$ but $[x2]=0$, since the previous value at $x1$ was $low$, then we no longer maintain $G$.

To prevent such an occurrence we can insert a $\flush~ c$ instruction immediately after the store to $c$. This prevents the stores being persisted out of order. Note that a $\flush_{opt}$ instruction is not sufficient as it could reorder with the following store to $x1$ which could then reorder with the store to $c$. 

The $\flush~c$ instruction also prevents the initial store to $c$ reordering with that to $x2$. However, the stores to $x1$ and $x2$ can still reorder and we consider that possibility next. 

\[~~~ \wpif([x1] := r1~~ [x2]:=r2, Q)\\
= \wpif([x1]:=r1, \wpif([x2]:=r2,Q)) & (sequential composition)\\
= \wpif(\M [x1]:=r1, Q[[x2],\Gamma_{x2} \backslash r2, \Gamma_{r2}] \land \Gamma_{r2}=low\land\\
(c\mod 2=0 \implies [x2]=r2 \land ([x2]=1 \iff \Gamma_{x1}=high))\O & (store)\\
= \M Q[[x2],\Gamma_{x2} \backslash r2, \Gamma_{r2}]\,[x1,\Gamma_{x1}\backslash r1,\Gamma_{r1}]\land \Gamma_{r2}=low \land\\
([c]\mod 2=0 \implies [x2]=r2 \land [x1]=r1\land ([x2]=1 \iff \Gamma_{r1}=high))\O\ & (store)\\
\zbreak
~\\
~~~\wpif([x2] := r2~~ [x1]:=r1, Q)\\
= \wpif([x2]:=r2, \wpif([x1]:=r1,Q)) & (sequential composition)\\
= \wpif(\M [x2]:=r2, Q[[x1],\Gamma_{x1} \backslash r1, \Gamma_{r1}] \land\\
(c\mod 2=0 \implies [x1]=r1 \land ([x2]=1 \iff \Gamma_{x1}=high))\O & (store)\\
= \M Q[[x1],\Gamma_{x1} \backslash r1, \Gamma_{r1}]\,[x2,\Gamma_{x2}\backslash r2,\Gamma_{r2}]\land \Gamma_{r2}=low \land\\
([c]\mod 2=0 \implies [x1]=r1 \land [x2]=r2\land ([x2]=1 \iff \Gamma_{r1}=high))\O & (store)\]

In this case, the weakest preconditions are identical requiring the value at $r2$ to be low since $\LL(x2)=low$ and that if the value at $c$ is even then the values of $x1$ and $x2$ are not changed and that $G$ holds initially. Hence, persisting the stores out of order does not affect security. 

In addition to persisting after the store to $x2$, the store to $x1$ can be persisted after the final store to $c$. The weakest preconditions for this case, $\wpif([x1]:=r1~~ [c]:=r0_1+1, Q)$ and $\wpif([c]:= r0_1 + 1~~ [x1]:= r1)$ are similar to those calculated above but with $r0_1$ in place of $r0_0$. The former does not imply the latter, again indicating a security issue: in this case, if $[x1]$ and $[x2]$ were $high$ and 1, respectively, and an even value is persisted at $c$ and 0 at $x2$ before a $low$ value at $x1$ then, after a power outage, the $high$ value at $x1$ could be read by a reader thread. 

Similarly, since the store to $x2$ can be persisted after the final store to $c$, an even value could be persisted at $c$ and a $high$ value at $x1$ before 1 is persisted at $x2$. Again this allows a reader thread to read the $high$ value in $x1$. Both occurrences can be prevented by including $\flush~x1$ and $\flush~x2$ instructions before the final store to $c$. The resulting code is shown below. 

\[
~~\M r0 := [c]\\
[c] := r0 + 1\\
\flush~c\\
[x1] := r1\\
[x2] := r2\\
r0 := [c]\\
\flush~ x1\\
\flush ~x2\\
[c] := r0 + 1\\
\return\O\]
Since the only stores that can persist out of order are those to $x1$ and $x2$, this code is secure. Since the $read$ procedure does not include any stores, the above changes are sufficient to prove that seqlock is secure on x86 persistent memory.

\section{Soundness}
\label{sec:sound}

The property that $\wpif$ verifies, when the calculated weakest precondition of a program holds, is {\em value-dependent non-interference\/} as defined in \cite{mur16}. This property states that, given two initial states $s_1$ and $s_2$ which agree on the values of variables which are $low$, after executing a prefix of instructions $t$ of the program on each state, the resulting states will continue to agree on the values of variables which are $low$. In other words, the values of variables which are $high$ have no effect on those that are $low$ (and hence the $high$ values cannot be deduced from observations of the $low$ values). Formally, given a program $c$ with specified pre- and postconditions $P$ and $Q$, respectively, we have

\[P \implies \wpif(c, Q) \Rrightarrow\\
\t1  \all s_1, s_2 \mem P; t \leqslant c\cdot \all s_1' \cdot s_1 \sim s_2 \land s_1 \fun_t s_1' \implies \exists s_2'\cdot s_2 \fun_t s_2' \land s_1'\sim s_2'& (2)\]
where $t \leqslant P$ denotes that $t$ is a prefix of $c$, $s_1\sim s_2$ denotes $s_1$ and $s_2$ agree on $low$ values, and $s_1\fun_t s_1'$ denotes $s_1'$ is reached from $s_1$ by instructions $t$. Note that since the programming language is deterministic, the above property implies that all states reached from $s_2$ by $t$ agree with the $low$ values of $s_1'$. The soundness of the logic with respect to this property has been proven in Isabelle/HOL \cite{win21}.

The soundness of rif with respect to the reordering semantics of Colvin and Smith \cite{col18a,col18b} (see Section~\ref{sec:defn}) has also been established in Isabelle/HOL \cite{cou21,cou23}. This proof establishes the fact that if a property (such as (2) above) holds and the rif check (see (1)) holds for all reorderable pairs of instructions, then the property holds when the program is subject to reordering. More formally, for a program $c$ with rely $\R$ and guarantee $\G$, let ${\cal P}(c)$ be a property of $c$ and ${\cal P}_R(c)$ be the same property when $c$ is subject to the reordering relation $R$. We have

\[\t1 {\cal P}(c) \land (\all (\alpha, \beta) \mem reorderable(c,R)\cdot rif(\alpha, \beta, \R, \G)) \Rrightarrow {\cal P}_R(c) & (3)\]
where $reorderable(c,R)$ denotes the set of pairs of reorderable instructions of program $c$ under reordering relation $R$. As mentioned in Section~\ref{sec:defn}, this set can be calculated using a dataflow analysis. 

For the approach in this paper, we additionally need to prove the soundness of the reordering semantics in the context of persistent memory; that is, that our reordering of stores has the same effect on behaviour as a crash due to a power outage or shutdown. We do this by showing that the behaviours allowed under such reordering are equivalent to those of the Intel-x86 persistent memory model Px86$_{\mbox{\footnotesize sim}}$ of Raad~et~al.~\cite{raa20a} (a model developed in consultation with Intel engineers). 

Px86$_{\mbox{\footnotesize sim}}$ has both an operational and declarative semantics which have been proven equivalent. Here we focus on the former. To model delayed stores in x86, the operational semantics models a store happening in two steps: the first records the store in a store buffer $B(\tau)$, where $\tau$ is the identifier of the thread performing the store, and the second transfers the store from the buffer to volatile memory, modelled by a buffer $PB$ of instructions awaiting to be persisted to persistent memory $M$. The first step is captured by the rule \mbox{\sc M-Write}.

\begin{infrule}
\t3 B(\tau) = b
\derive[\mbox{\sc M-Write}]
M,PB,B ~\stackrel{\tau:[x]:=v}{\relbar\joinrel\relbar\joinrel\relbar\joinrel\relbar\joinrel\longrightarrow}~ M,PB,B[\tau \mapsto b.\lseq x,v\rseq]
\end{infrule}
Forwarding is captured by the load rule ({\sc M-Read}) reading the most recent value from the thread's buffer $B$ when one is present (thus reading from a store which has been delayed). Otherwise, it reads the most recent value from volatile memory $PB$ if a value exists there, otherwise from the persistent memory $M$. The rule for $\mfence$ instructions ({\sc M-MF}) requires that $B$ is empty, ensuring all earlier writes have been propagated to volatile memory before the fence. 

Similar to {\sc M-Write}, the rules {\sc M-FL} and {\sc M-FO} add $\lseq fl, x\rseq$ to $B$ for each $\flush~x$ and $\lseq fo, x\rseq$ to $B$ for each $\flush_{opt}~x$, respectively. To ensure atomicity, however, the rule for a $\RMW$ instruction is only enabled when $B$ is empty, and writes directly to volatile memory to ensure that it is immediately visible to all threads. Letting $\epsilon$ denote the empty buffer this is captured by the rule {\sc M-RMW}.

\begin{infrule}
\t2 b \t3 B(\tau) = \epsilon
\derive[\mbox{\sc M-RMW}]
M,PB,B ~\stackrel{\tau:RMW(b,x,e_1,e_2)}{\relbar\joinrel\relbar\joinrel\relbar\joinrel\relbar\joinrel\relbar\joinrel\relbar\joinrel\relbar\joinrel\relbar\joinrel\relbar\joinrel\relbar\joinrel\longrightarrow}~ M,PB.\lseq x,e_2\rseq,B
\end{infrule}

The second step for a store is captured by the rule {\sc M-BPropW}.

\begin{infrule}
B(\tau)=b_1.\lseq x,v\rseq.b_2\t2 \all y,v' \cdot \lseq y, v'\rseq, \lseq fl, y\rseq\nmem b_1
\derive[\mbox{\sc M-BPropW}]
\t1~~ M,PB,B \longrightarrow M,PB.\lseq x,v\rseq, B[\tau\mapsto b_1.b_2]
\end{infrule}
The second premise in rule {\sc M-BPropW} captures the fact that stores may not leave the buffer before stores or $\flush$ instructions added earlier, but may leave the buffer before $\flush_{opt}$ instructions added earlier. Again there are similar rules for propagating $\flush$ and $\flush_{opt}$ instructions to $PB$.

\begin{infrule}
\t2 B(\tau)=b_1.\lseq fl,x\rseq.b_2\\ 
\all y,v\cdot \all x'\mem X \cdot \lseq y, v\rseq, \lseq fo, x'\rseq, \lseq fl, y\rseq\nmem b_1
\derive[\mbox{\sc M-BPropFL}]
M,PB,B \longrightarrow M,PB.\lseq per, x\rseq, B[\tau\mapsto b_1.b_2]
\end{infrule}

\begin{infrule}
\t2 B(\tau)=b_1.\lseq fo,x\rseq.b_2\\
\t1 \all v\cdot \all x'\mem X \cdot \lseq x', v\rseq, \lseq fl, x'\rseq\nmem b_1
\derive[\mbox{\sc M-BPropFO}]
M,PB,B \longrightarrow M,PB.\lseq per, x\rseq, B[\tau\mapsto b_1.b_2]
\end{infrule}
where $X$ is the set of locations on the same cache line as $x$. Note that a $\flush_{opt}$ must not be propagated before earlier stores and flushes to the same cache line, whereas a $\flush$ cannot be propagated earlier than any store or $\flush$, nor earlier than any $\flush_{opt}$ on the same cache line.

Finally, there are rules for persisting writes, ensuring $\flush$ and $\flush_{opt}$ instructions are taken into account.

\begin{infrule}
PB=PB_1.\lseq x, v\rseq.PB_2 \t1~~~~ \all y, v'\cdot \lseq x, v'\rseq, \lseq per, y\rseq\nmem PB_1
\derive[\mbox{\sc M-PropW}]
\t2 M,PB,B \longrightarrow M[x\mapsto v],PB_1.PB_2, B
\end{infrule}

\begin{infrule}
\t1~~~~ PB=PB_1.\lseq per, x\rseq.PB_2\\
 \all y, v\cdot \all x'\mem X \cdot \lseq x', v\rseq, \lseq per, y\rseq\nmem PB_1
\derive[\mbox{\sc M-PropP}]
\t1 M,PB,B \longrightarrow M,PB_1.PB_2, B
\end{infrule}

Given a state $\sigma$ and an x86 assembly program $c$, let a configuration of the program be a tuple $(\sigma,c',p)$ where $c'$ is a suffix of $c$ of instructions yet to be executed and $p$ is a sequence of delayable instructions (stores and flushes) which occur earlier in $c$ than the suffix and are also yet to be executed, i.e., yet to take effect in volatile memory. We use $\epsilon$ to represent $p$ when it is empty. The configuration captures the progress of the program in terms of what has taken place in volatile memory and is hence visible to other threads. 

For the reordering semantics of Section~\ref{sec:defn}, when an instruction is executed, all earlier instructions in $c'$, if any, are added (in program order) to the end of $p$. These instructions and the executed instruction are removed from $c'$. When one of the instructions in $p$ is executed, it is removed from $p$. 

For Px86$_{\mbox{\footnotesize sim}}$, when an instruction is added to buffer $B$ it is added to the end of $p$ and removed from $c'$. When it is propagated from $B$ to $PB$, i.e., takes effect in volatile memory, it is removed from $p$. That is, $p$ denotes the same sequence of instructions as $B$.

\begin{theorem}
Any configuration $(\sigma,c',p)$ of a program $c$ reachable by the reordering semantics when $R$ is defined as in Table~\ref{tab:pm}, is reachable by the semantics of Px86$_{\mbox{\footnotesize sim}}$, and vice versa.
\end{theorem}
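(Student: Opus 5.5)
I would prove the two directions by simulation, each by induction on the length of the execution. Both semantics are related through the configuration $(\sigma,c',p)$. Here $\sigma$ is the volatile state: memory as seen through $PB$ over $M$, together with the thread's registers. The key invariant on the Px86$_{\mbox{\footnotesize sim}}$ side is that $B(\tau)$ and $p$ list the same sequence of delayed stores and flushes in program order. Since the configuration mentions only volatile memory and the pending sequence $p$, the persistence rules {\sc M-PropW} and {\sc M-PropP} never change the configuration. I would therefore treat them as stuttering steps and concentrate on the steps that alter $\sigma$, $c'$ or $p$.

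For the direction from reordering to Px86$_{\mbox{\footnotesize sim}}$, take a reordering step that executes an instruction $\beta$ from $c'$. The reordering rule requires $(\alpha,\beta)\in R$ for every earlier $\alpha$ it passes. Table~\ref{tab:pm} permits only stores, $\flush$ and $\flush_{opt}$ as the earlier instruction $\alpha$, so the instructions skipped over and appended to $p$ are exactly delayable ones. I would simulate this by applying {\sc M-Write}, {\sc M-FL} or {\sc M-FO} to each skipped instruction in program order, which keeps $B(\tau)=p$. Then $\beta$ itself is handled by cases:
\begin{enumerate}[(i)]
\item If $\beta$ is a load, the forwarding case $\beta_{\lseq\alpha\rseq}$ matches {\sc M-Read} reading the newest buffered store to the same location. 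Where there is no such store, {\sc M-Read} reads from $PB$ or $M$, as $\sigma$ does.
\item If $\beta$ is a register update or jump, it is executed locally.
\item If $\beta$ is a store or flush, it is buffered and then immediately propagated by {\sc M-BPropW}, {\sc M-BPropFL} or {\sc M-BPropFO}. The side conditions of these rules must follow from $(\alpha,\beta)\in R$ for all $\alpha\in p$; for example, the CL entries correspond to the cache-line premises over $X$.
\item If $\beta$ is an $\RMW$ or $\mfence$, every entry in that column is X, so $p=\epsilon$, which gives $B(\tau)=\epsilon$ as {\sc M-RMW} and {\sc M-MF} require.
\end{enumerate}
A reordering step that executes an element of $p$ is matched by the corresponding propagation rule in the same way.

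For the converse, take a Px86$_{\mbox{\footnotesize sim}}$ step. The buffering steps {\sc M-Write}, {\sc M-FL} and {\sc M-FO} only move the next instruction from $c'$ to $p$. I would not match these immediately. Instead I would postpone them and realise them when the next non-buffering step occurs, since the reordering semantics appends skipped instructions to $p$ only at that moment. This requires strengthening the invariant so that the reordering configuration may lag behind, with a pending prefix of $c'$ consisting only of delayable instructions. A propagation step that takes a store or flush out of $b_1.\lseq\cdot\rseq.b_2$ corresponds to executing that element of $p$ out of order. Its premise that $b_1$ contains no stores and no conflicting flushes must imply $(\alpha,\beta)\in R$ for each $\alpha\in b_1$, which I would check row by row against Table~\ref{tab:pm}. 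Loads, $\mfence$ and $\RMW$ are handled as mirror images of cases (i) and (iv).

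The main obstacle is the exact correspondence between the side conditions of the propagation rules and the table entries, including how reorderings of items within $p$ compose. In particular, a $\flush_{opt}$ may overtake later stores to different lines, while a $\flush$ is blocked by an earlier $\flush_{opt}$ only on the same line. I would handle this with a lemma: an element of $p$ can be executed next in the reordering semantics if and only if it commutes, according to $R$, with every earlier element of $p$, and this holds if and only if the premise of the matching {\sc M-BProp} rule is satisfied. I would prove the lemma by exhaustive case analysis on the pairs of delayable instruction kinds.
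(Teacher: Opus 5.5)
Your proposal is correct, and its core matches the paper's proof. Both use induction over reachable configurations with a case split on instruction kind. A reordering step that skips a delayable prefix $c_1$ is matched by buffering $c_1$ (and the instruction itself, if it is delayable) via {\sc M-Write}/{\sc M-FL}/{\sc M-FO}, followed by {\sc M-Read}, a local step, or the matching {\sc M-BProp} rule. Each column of Table~\ref{tab:pm} is checked against that rule's premise.

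You differ in two respects. First, you factor the column checks into one lemma: an element of $p$ may take effect next iff it $R$-commutes with every earlier element, iff the {\sc M-BProp} premise holds. The paper instead repeats this check inline in cases (ii), (iv) and (v).

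Second, and more substantively, your lag invariant handles Px86$_{\mbox{\footnotesize sim}}$ runs that stop right after a buffering step. The paper never addresses these: its converse arguments only compare configurations at the end of a buffer-then-take-effect block. Be clear about what your invariant gives. Each Px86$_{\mbox{\footnotesize sim}}$-reachable $(\sigma,c',p''\,d)$ is matched by the reordering-reachable $(\sigma,d\,c',p'')$. That is reachability up to the buffered suffix $d$, not literal equality.

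This weakening is unavoidable. For $c=[x]:=1~~r:=[y]$, one {\sc M-Write} reaches $(\sigma,r:=[y],[x]:=1)$, which the reordering semantics cannot reach. There, an instruction enters $p$ only when some later instruction executes and leaves $c'$, so the instruction immediately before $c'$ is never pending. Hence the ``vice versa'' holds exactly only in your lagged form, or for configurations reached by a step that takes effect in volatile memory; the latter is what the paper's proof tacitly shows.

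The paper's version is shorter and reads straight off the table. Yours makes precise the sense in which the equivalence actually holds.
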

\noindent{\bf Proof by induction:} \\
Base case: Initially, the configuration of $c$ in a state $\sigma_0$ is $(\sigma_0,c,\epsilon)$ under both semantics.\\
Inductive case: Assume that the program is in a configuration $(\sigma,c',p)$ reachable by both semantics. Let $\sigma_\alpha$ denote the state resulting from executing instruction $\alpha$ from state $\sigma$.
\begin{enumerate}[(i)]
\item A {\bf load}, {\bf register update} or {\bf jump} in $c'$ can execute under the reordering semantics, only when each instruction occurring before it in $c'$ is a store, $\flush$ or $\flush_{opt}$. As can be seen in Table~\ref{tab:pm}, these are the only instructions which a later load, update or jump can reorder with. That is, when $c'= c_1~ \beta~ c_2$ where $c_1$ is a sequence of stores and flushes and $\beta$ is the load, update or jump, the configuration can become $(\sigma_{\beta_{\lseq\alpha\rseq}},c_2, p~c_1)$, where $\alpha$ is the latest store to the same location in $c_1$ if any (and $\beta_{\lseq\alpha\rseq}$ is $\beta$ otherwise).%
\vspace{2mm}

Similarly, a load, update or jump in $c'$ can execute under Px86$_{\mbox{\footnotesize sim}}$ only when each instruction occurring before it in $c'$ is a store, $\flush$ or $\flush_{opt}$. Each of these instructions (with expressions evaluated to values) can be placed in order into $B$ (via rules {\sc M-Write}, {\sc M-FL} and {\sc M-FO}). The load, update or jump is then able to execute; a load $\beta$ taking on the value of the most recent store $\alpha$ to the same location in $B$ if any (via rule {\sc M-Read}), i.e., executing as $\beta_{\lseq\alpha\rseq}$. Again, the configuration becomes $(\sigma_{\beta_{\lseq\alpha\rseq}}, c_2, p~ c_1)$.\vspace{2mm}

\item A {\bf store} $\alpha$ can execute under the reordering semantics when it is either in $p$ and each instruction before it is a $\flush_{opt}$, or it is in $c'$ and each instruction before it in $c'$ as well as each instruction in $p$ is a $\flush_{opt}$. This follows as the only instructions which a later store can reorder with are $\flush_{opt}$ instructions (see Table~\ref{tab:pm}). In the first case, configuration $(\sigma,c', p_1~ \alpha~ p_2)$ becomes $(\sigma_\alpha,c', p_1~ p_2)$ and in the second, configuration $(\sigma,c_1~\alpha~c_2, p)$ becomes $(\sigma_\alpha,c_2, p~ c_1)$. \vspace{2mm}

Similarly, a store can execute under Px86$_{\mbox{\footnotesize sim}}$ only under the same configurations. This is because propagation to volatile memory (see rule {\sc M-BPropW}) requires there are no other stores or $\flush$ instructions before it in the buffer $B$, only $\flush_{opt}$ instructions. In the case of configuration $(\sigma,c', p_1~ \alpha~ p_2)$, the store can leave $B$ (and hence $p$) 
resulting in $(\sigma_\alpha,c', p_1~p_2)$. In the case of configuration $(\sigma,c_1~\alpha~c_2, p)$, the store can be added to $B$ by rule {\sc M-Write} after each of the $\flush_{opt}$ instructions preceding it in $c'$ are added to $B$ (via rule {\sc M-FO}) resulting in configuration $(\sigma,c_2, p~c_1~\alpha)$. After this, rule {\sc M-BPropW} can be used to propagate the store to volatile memory (since there are no stores or $\flush$ instructions ahead of it in $B$), resulting in configuration $(\sigma_\alpha,c_2, p~ c_1)$. \vspace{2mm}

\item An {\bf RMW} or {\bf mfence} instruction $\alpha$ cannot be reordered with any instruction. Hence, in the reordering semantics it will only execute in a configuration $(\sigma, \alpha~c',\epsilon)$, i.e., when all previous instructions in the program have taken effect in volatile memory. The configuration becomes $(\sigma_\alpha, c', \epsilon)$. \vspace{2mm}

Similarly, an $\RMW$ or $\mfence$ can only execute in Px86$_{\mbox{\footnotesize sim}}$ when $B$ (and hence $p$) is empty. That is, it can only execute from a configuration $(\sigma, \alpha~c',\epsilon)$, resulting in configuration $(\sigma_\alpha, c',\epsilon)$.\vspace{2mm}

\item A {\bf flush} instruction $\alpha$ can only occur in the reordering semantics when it is in $p$ and each instruction before it is a $\flush_{opt}$ belonging to a different cache line, or it is in $c'$ and each instruction before it in $c'$ as well as each instruction in $p$ is a $\flush_{opt}$ belonging to a different cache line (see Table~\ref{tab:pm}). In the first case, configuration $(\sigma, c', p_1~ \alpha~ p_2)$ becomes $(\sigma, c', p_1~ p_2)$ and in the second, configuration $(\sigma, c_1~\alpha~c_2, p)$ becomes $(\sigma, c_2, p~ c_1)$. \vspace{2mm}

Similarly, a $\flush$ can execute under Px86$_{\mbox{\footnotesize sim}}$ only under the same configurations. This is because propagation to volatile memory (see rule {\sc M-BPropFL}) requires there are no stores, other $\flush$ instructions or $\flush_{opt}$ instructions on the same cache line before it in the buffer $B$, only $\flush_{opt}$ instructions on different cache lines. In the case of configuration $(\sigma, c', p_1~ \alpha~ p_2)$, the $\flush$ can leave $B$ (and hence $p$) 
resulting in configuration $(\sigma, c', p_1~p_2)$. In the case of configuration $(\sigma, c_1~\alpha~c_2, p)$, the $\flush$ can be added to $B$ by rule {\sc M-FL} after each of the instructions preceding it in $c'$ are added to $B$ (via rule {\sc M-FO}) resulting in configuration $(\sigma, c_2, p~c_1~\alpha)$. After this, rule {\sc M-BPropFL} can be used to propagate the $\flush$ to volatile memory, resulting in configuration $(\sigma, c_2, p~ c_1)$. \vspace{2mm}

\item A {\bf flush$_{opt}$} instruction $\alpha$ can only occur in the reordering semantics when it is in $p$ and each instruction before it is a store or a $\flush$ belonging to a different cache line or a $\flush_{opt}$, or it is in $c'$ and each instruction before it in $c'$ as well as each instruction in $p$ is a store or a $\flush$ belonging to a different cache line or a $\flush_{opt}$ (see Table~\ref{tab:pm}). In the first case, configuration $(\sigma, c', p_1~ \alpha~ p_2)$ becomes $(\sigma, c', p_1~ p_2)$ and in the second, configuration $(\sigma, c_1~\alpha~c_2, p)$ becomes $(\sigma, c_2, p~ c_1)$. \vspace{2mm}

Similarly, a $\flush_{opt}$ can execute under Px86$_{\mbox{\footnotesize sim}}$ only under the same configurations. This is because propagation to volatile memory (see rule {\sc M-BPropFO}) requires there are no stores or $\flush$ instructions on the same cache line before it in the buffer $B$, only stores and $\flush$ instructions on different cache lines, and $\flush_{opt}$ instructions. In the case of configuration $(\sigma, c', p_1~ \alpha~ p_2)$, the $\flush_{opt}$ can leave $B$ (and hence $p$) 
resulting in the configuration $(\sigma, c', p_1~p_2)$. In the case of configuration $(\sigma, c_1~\alpha~c_2, p)$, the $\flush_{opt}$ can be added to $B$ by rule {\sc M-FO} after each of the instructions preceding it in $c'$ are added to $B$ (via rules {\sc M-Write} and {\sc M-FL}) resulting in configuration $(\sigma, c_2, p~c_1~\alpha)$. After this, rule {\sc M-BPropFO} can be used to propagate the $\flush_{opt}$ to volatile memory, resulting in configuration $(\sigma, c_2, p~ c_1)$.
\vspace{-4mm}\[~~ & $\QED$\]
\end{enumerate}

The above theorem shows that the order of instructions taking effect on volatile memory are identical in our reordering semantics and that of Px86$_{\mbox{\footnotesize sim}}$, ensuring soundness of our reasoning (using the reordering relation of Table~\ref{tab:pm}). This reasoning does not, however, take into account crash states resulting from a power outage. To show soundness under power outages, we need to consider the propagation of instructions to persistent memory. 

For a state $\pi$ corresponding to persistent memory and program $c$, we let $[\pi,c', q]$ denote a persistent configuration where $c'$ is a suffix of $c$ of instructions yet to be executed, and $q$ is a sequence of stores and flushes which occur earlier in $c$ than the suffix and are also yet to take effect in persistent memory. In Px86$_{\mbox{\footnotesize sim}}$, $q$ corresponds to the sequence of the stores in $PB$ and stores and flushes in $B$ in the order that they appear in $c$. Note that flushes in $PB$ have already taken effect in the sense that they can no longer be reordered
(see rules {\sc M-PropW} and {\sc M-PropP} where both kinds of flushes are represented by $per$). 

\begin{theorem}
Any persistent configuration $[\pi, c',q]$ of a program $c$ reachable by the reordering semantics when $R$ is defined as in Table~\ref{tab:out}, is reachable by the semantics of Px86$_{\mbox{\footnotesize sim}}$, and vice versa.
\end{theorem}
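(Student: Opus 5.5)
We would prove the statement by induction on the length of executions, mirroring the proof of the previous theorem but now tracking propagation to persistent memory rather than to volatile memory. The base case is immediate: from initial persistent memory $\pi_0$ both semantics start in $[\pi_0,c,\epsilon]$. For the inductive step we assume a persistent configuration $[\pi,c',q]$ reachable under both semantics and show that each step of one semantics that changes $\pi$ or shortens $c'$ can be matched by the other. We reuse the previous theorem for everything concerning the interaction of loads, register updates, jumps, $\RMW$ and $\mfence$ with the buffer $B$. Table~\ref{tab:out} differs from Table~\ref{tab:pm} only in the store/store cell. Moreover, a store executed out of order with respect to an earlier store in the reordering semantics affects only which stores reach persistent memory first, not what is visible in volatile memory. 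So these cases carry over unchanged: such instructions do not alter $\pi$, and they move earlier instructions of $c'$ to the end of $q$ exactly as before.

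The substantive cases are stores and flushes taking effect in persistent memory. In the direction from Px86$_{\mbox{\footnotesize sim}}$ to reordering, a store $\langle x,v\rangle$ is persisted by {\sc M-PropW} when no earlier store to the same location $x$ and no earlier $\lseq per, y\rseq$ entry for $x$'s cache line remains in $PB$. We would argue that, in $q$, every instruction before this store is therefore one of the following: a store to a different location; a $\flush_{opt}$ that may be overtaken by a later store (the $\checkmark$ entry in Table~\ref{tab:out}); or a flush on a different cache line that has not yet become a $per$ entry. We then check that each of these is reorderable with a later store under Table~\ref{tab:out} (cells L, $\checkmark$ and the relevant CL/X entries). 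The constraint imposed by a $\flush$ that is still in $B$ must be recovered from {\sc M-BPropW}, which forbids a store leaving $B$ past an earlier $\flush$. Hence the reordering semantics can execute the same store directly from $q$ (or from $c'$ after moving its predecessors into $q$), giving $[\pi[x\mapsto v], c'', q']$ in both. Flushes are handled similarly: {\sc M-PropP} removes a $per$ entry only once all earlier stores on its cache line have persisted, matching the CL entries.

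For the converse we take a step of the reordering semantics executing a store or flush $\alpha$ from $q$ or $c'$, all of whose predecessors are reorderable with it by Table~\ref{tab:out}. We then construct a Px86$_{\mbox{\footnotesize sim}}$ schedule as follows. Apply {\sc M-Write}, {\sc M-FL} and {\sc M-FO} to push the preceding instructions of $c'$ into $B$. Then propagate to $PB$ exactly those earlier buffered instructions needed to free $\alpha$; this is possible because stores leave $B$ in order, which only reorders volatile visibility, not persistence. Finally apply {\sc M-PropW} to $\alpha$ ahead of the earlier stores to other locations still sitting in $PB$.

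The main obstacle is the bookkeeping mismatch between the single sequence $q$ and the two buffers $B$ and $PB$. A store to $y$ following a $\flush\ x$ can persist out of order under Table~\ref{tab:out} only if $x$ and $y$ lie on different cache lines, whereas Px86$_{\mbox{\footnotesize sim}}$ requires the flush to first be propagated to a $per$ entry. Consequently we must show that the X entry for flush-then-store in Table~\ref{tab:out} is exactly what the combination of {\sc M-BPropW} and {\sc M-PropW} enforces. We would handle this by first proving an invariant relating $q$ to the pair $(PB,B)$: flushes in $q$ correspond precisely to flush entries in $B$, and $per$ entries of $PB$ are not part of $q$. We would then do the case analysis against that invariant, treating one cell of Table~\ref{tab:out} per case.
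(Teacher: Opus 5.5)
Your decomposition matches the paper's: induction from $[\pi_0,c,\epsilon]$, delegating the non-store instructions to the previous theorem, and analysing persisting stores via {\sc M-BPropW}/{\sc M-PropW}. The gap is in the delegation step, and the paper's own case~(i) relies on the same appeal, so it does not fill the hole. The store/store cell is not the only thing that changes. A store now counts as executed when it \emph{persists}, so every X entry in the store row waits for persistence. The previous theorem shows $\mfence$ and $\RMW$ fire only from $(\sigma,\alpha~c',\epsilon)$, i.e.\ when $p$ is empty. Under Table~\ref{tab:out} they fire only when $q$ is empty, i.e.\ when all earlier stores have persisted. Yet {\sc M-MF} and {\sc M-RMW} require only $B(\tau)=\epsilon$, and stores waiting in $PB$ are still in $q$. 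Concretely, take $[z]:=1~\mfence~[y]:=1$ with $y\neq z$. Px86$_{\mbox{\footnotesize sim}}$ can apply {\sc M-Write}, {\sc M-BPropW}, {\sc M-MF}, {\sc M-Write}, {\sc M-BPropW}, then {\sc M-PropW} to $\lseq y,1\rseq$ (no $per$ entry and no store to $y$ precedes it in $PB$). This reaches $[\pi[y\mapsto 1],\epsilon,[z]:=1]$. Under Table~\ref{tab:out}, $[y]:=1$ cannot pass the $\mfence$ and the $\mfence$ cannot pass $[z]:=1$, so this configuration is unreachable. Hence ``these cases carry over unchanged'' is false. No invariant relating $q$ to $(PB,B)$ rescues the Px86$_{\mbox{\footnotesize sim}}$-to-reordering direction as stated.

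Your flush case fails the same way, and it also misreads Table~\ref{tab:out}. Both the store/$\flush$ and $\flush$/store cells are X; the CL cells involve $\flush_{opt}$ only. So ``a store to $y$ following $\flush~x$ can persist out of order only if $x$ and $y$ lie on different cache lines'' and ``{\sc M-PropP} \ldots\ matching the CL entries'' correspond to nothing in the table. Your predecessor ``a flush on a different cache line that has not yet become a $per$ entry'' cannot occur ({\sc M-BPropW} forbids it), and it would not be reorderable anyway. More seriously, suppose $\flush~x$ leaves $q$ at {\sc M-BPropFL}, as your invariant that $per$ entries are not in $q$ forces, or at {\sc M-PropP}. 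Either way Px86$_{\mbox{\footnotesize sim}}$ lets it take effect while earlier stores are still unpersisted. Take $[z]:=1~\flush~x~[y]:=1$ with $z$ off $x$'s cache line. {\sc M-PropP} discharges $\lseq per,x\rseq$ past $\lseq z,1\rseq$, and then $\lseq y,1\rseq$ persists before $\lseq z,1\rseq$, which Table~\ref{tab:out} forbids. The missing ingredient is an explicit relation between $p$ and $q$: fences, RMWs and flushes take effect at the volatile level, while stores take effect at the persistent level. A cell-by-cell check against Table~\ref{tab:out} cannot supply that.
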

\noindent{\bf Proof by induction:} \\
Base case: Initially, the persistent configuration of $c$ in a state $\pi_0$ is $[\pi_0,c,\epsilon]$ under both semantics.\\
Inductive case: Assume that the program is in a persistent configuration $[\pi,c',q]$ reachable by both semantics. Let $\pi_\alpha$ denote the persistent state resulting from executing instruction $\alpha$ from persistent state $\pi$.

\begin{enumerate}[(i)]
\item {\bf Load}, {\bf register update}, {\bf jump}, {\bf RMW}, {\bf mfence}, {\bf flush} and {\bf flush$_{opt}$} instructions update a persistent configuration $[\pi, c',q]$ when they take effect in volatile memory. Hence, given that the instructions they can reorder before are identical in Tables~\ref{tab:pm} and~\ref{tab:out}, the reasoning for these instructions from Theorem~1 suffices to prove that the same persistent configurations can be reached by the reordering semantics and Px86$_{\mbox{\footnotesize sim}}$. \vspace{2mm}

\item A {\bf store} $\alpha$ is persisted in the reordering semantics when it is either in $q$ and each instruction before it is a store to another location or a $\flush_{opt}$ (see Table~\ref{tab:out}), or it is in $c'$ and each instruction before it in $c'$ as well as each instruction in $q$ is a store to another location or a $\flush_{opt}$. In the first case, configuration $[\pi, c', q_1~ \alpha~ q_2]$ becomes $[\pi_\alpha, c', q_1~ q_2]$ and in the second, configuration $[\pi, c_1~\alpha~c_2, q]$ becomes $[\pi_\alpha, c_2, q~ c_1]$. \vspace{2mm}

Similarly, a store can persist under Px86$_{\mbox{\footnotesize sim}}$ only under the same configurations. If the store is in $B$ it can propagate to $PB$ along with any earlier stores in $B$ (without changing the persistent configuration) provided there are only store and $\flush_{opt}$ instructions earlier in $B$ (see rule {\sc M-BPropW}). If it is in $PB$, it can propagate to persistent memory (see rule {\sc M-PropW}) provided there are no earlier stores to the same location in $PB$ (recall the flushes in $PB$ have already taken effect and are not included in $q$).\vspace{2mm}

In the case of configuration $[\pi, c', q_1~ \alpha~ q_2]$, the store  propagates to $PB$ as described above 
and from $PB$ to persistent memory once 
all earlier flushes are removed from $PB$ using rule {\sc M-PropP}. 
The resulting configuration is $[\pi_\alpha, c', q_1~q_2]$. In the case of configuration $[\pi, c_1~\alpha~c_2, q]$, the store can be added to $B$ by rule {\sc M-Write} after each of the instructions preceding it in $c'$ are added to $B$ (rules {\sc M-Write} and {\sc M-FO}). After this, rule {\sc M-BPropW} can be used to propagate all stores in $B$ to $PB$ (since this propagation can occur before preceding $\flush_{opt}$ instructions in $B$). The resulting configuration is $[\pi, c_2, q~ c_1~\alpha]$. Finally, the store can be persisted with rule {\sc M-PropW}: again all flushes are first removed from $PB$ and then the store is persisted which can occur before preceding stores to different locations. The resulting configuration is $[\pi_\alpha,c_2, q~ c_1]$ as required.
\vspace{-4mm}\[~~ & $\QED$\]
\end{enumerate}


\section{Related work}
\label{sec:rel}

To the best of our knowledge, there is no existing work on information flow security on persistent memory. There has been work, however, on information flow on processor memory models, and much activity on formalising persistent memory semantics and associated verification approaches. 

\subsection{Information flow and weak memory models}

Vaughan and Milstein \cite{vau12} provide a logic for detecting information leaks on TSO which cannot be detected using standard information flow techniques. Mantel et al.\ \cite{man14} show that information flow security on a variety of memory models (TSO, PSO and IBM 370) does not imply information flow security in the absence of a weak memory model, nor vice versa. 

Based on the semantics of Colvin and Smith \cite{col18a,col18b}, Smith et al.\ \cite{smi19,smi21} provide a logic that can provide information flow assurance on {\em any\/} processor memory model. That logic, however, only supports a limited form of compositional reasoning where rely/guarantee properties are restricted to read and write permissions on variables, and these permissions are fixed for the duration of a program's execution. Building on this work, a general approach using full rely/guarantee reasoning is presented by Coughlin and Smith \cite{cou22}. 

All of the above approaches lack the ability to separate the reasoning about information flow from that about the memory model, leading to more complex reasoning than the approach in this paper. However, a recent information flow logic for detecting speculative execution vulnerabilities by Coughlin et al.\ \cite{cou24}, like our approach, uses rif to simplify reasoning.

\subsection{Persistent memory semantics and proof techniques}

Izraelevitz et al.\ \cite{izr16} formalise the notion of {\em durable linearizability\/} for reasoning about crash resilience of concurrent data structures executing on persistent memory. This notion requires that all completed operations of a data structure are persisted (so they remain completed after a crash). A relaxed variant {\em buffered durable linearizability\/} is also presented which requires that completed operations are buffered for persistence, but not necessarily persisted. A proof technique based on refinement of IO-automata is provided for durable linearizability by Derrick et al.\ \cite{der19,der21}. These papers, like the work of Izraelevitz et al., do not consider the effects of the processor's memory model.

Raad and Vafeiadis \cite{raa18}, on the other hand, formalise semantics of a memory model PTSO which incorporates the TSO memory model of x86 with persistent memory. They define buffered durable linearizability for this model. Similarly, Raad et al.\ \cite{raa19} provide a memory model PARMv8 for ARMv8 persistent memory (along with a semantics of software transactions). PTSO is based on the concept of {\em epoch persistency\/} of Pelley et al.\ \cite{pel14} where the execution of each thread is divided into several epochs, separated by persist barriers. A persist barrier ensures that all stores prior to the barrier are persisted to memory before those after the barrier. Such barriers are more coarse-grained than the actual persistent primitives ($\flush$, $\flush_{opt}$ and $\wb$) used in Intel x86 persistent memory. Formal semantics of two memory models based on the actual primitives are provided in Raad et al.\ \cite{raa20a}: Px86$_{\mbox{\footnotesize man}}$ which is faithful to the (under-specified) Intel manual, and Px86$_{\mbox{\footnotesize sim}}$, a simplified model capturing the intention of the Intel engineers. It is the latter semantics which informed our use of rif on x86.

A strengthening of Px86$_{\mbox{\footnotesize sim}}$, called SPx86, is presented by Cho et al.\ \cite{cho21}. They argue that in Px86$_{\mbox{\footnotesize sim}}$ the semantics of $\flush$ operations in the presence of external operations, such as writes to a file, is too weak. Their semantics is {\em view-based\/} \cite{kan17} recording the entire history of stores to memory and allowing different threads to have different views on this history, i.e., threads may be able to read old stores. To fix the problem with $\flush$ instructions, they ensure such instructions are executed {\em synchronously\/}, i.e., they block execution until all pending stores on the associated cache line are persisted (rather than allow execution of reorderable instructions to proceed). The synchronous semantics of $\flush$ instructions also appears in the PTSO$_{syn}$ semantics of Khyzha and Lahav \cite{khy21} who show Px86$_{\mbox{\footnotesize sim}}$ and PTSO$_{syn}$ are equivalent in the absence of external operations. Adopting the synchronous $\flush$ semantics for our approach is easily achieved by redefining the reordering relations of Tables~\ref{tab:pm} and~\ref{tab:out}.


A program logic for persistent memory which is sound with respect to Px86$_{\mbox{\footnotesize sim}}$ is presented by Raad et al.\ \cite{raa20b}. The logic, {\em persistent Owicki-Gries\/} (POG), is based on a combination of Owicki-Gries \cite{owi76} and rely/guarantee \cite{jon83,xu97} reasoning, the latter to allow compositional reasoning. In POG, each memory location $x$ has three values associated with it: $x_v$ is a {\em volatile\/} value written to $x$ during program execution, $x_s$ is a {\em synchronously persisted\/} value that would be seen after a crash if the $\flush$ instructions were incremented synchronously, and $x_p$ is a {\em persistent} value that is seen after a crash. The added complexity, not needed in our approach where the program logic is independent of persistent memory, makes it too difficult to reason about $\flush_{opt}$ instructions. Instead, the authors present an approach to transforming (most) programs that use $\flush_{opt}$ to only use $\flush$ and restrict their logic to only support $\flush$ instructions. 

Bila et al.\ \cite{bil22a} provide an alternative program logic based on views \cite{kan17} called {\sc Pierogi} that does support $\flush_{opt}$ instructions. The logic, which is based on Owicki-Gries reasoning, is proven sound with respect to SPx86 of Cho et al.\ \cite{cho21}.

In other work, Bila et al.\ \cite{bil20} formalise a notion of {\em durable opacity\/} enabling formal reasoning about software transactional memory on persistent memory architectures, along with a proof technique based on refinement of IO-automata. The proof technique is modularised in \cite{bil22b} to separate the proof of opacity (perceived atomicity of transactions) \cite{gue08} from that of the persistent memory effects. As such it allows existing proofs of opacity to be reused to prove durable opacity, in much the same way as our approach allows reuse of existing proofs of information flow. Similarly, recent work by D'Osualdo et al.\ \cite{dos23} provides a proof technique for durable linearizability that, like our approach, is independent of the processor memory model and separates the persistency proof from a standard proof of linearizability (correctness of concurrent data structures) \cite{her90}.

\subsection{Crash recovery}

While orthogonal to the aims of this paper, there has also been work on the formalisation of crash recovery both for persistent memory \cite{cho23} and in general, e.g., \cite{kos16,sur23}. Investigating the application of rif in these contexts would make for interesting future work. 

\section{Conclusion}
\label{sec:con}

We have presented an information flow logic for an unstructured language representing assembly code, and shown how it can be used in combination with an existing technique for weak memory models, rif, to reason about security of programs running on persistent memory architectures. We illustrated our approach for Intel x86 persistent memory using a case study based on Linux's seqlock. 

The rif technique provides us with a proof technique that is applicable to all currently available processor memory models, and allows us to separate the reasoning about information flow, the processor's memory model and the effects of persistent memory. This leads to simpler reasoning and enables us to compare the information flow security of a program in the presence and absence of particular architectural features.   

Future work will focus on automated tool support for the approach and potential application of rif to crash recovery on both persistent memory and more generally. 

\subsection*{Acknowledgement}

The author would like to thank Kirsten Winter for her feedback on an earlier draft of this work. 

\bibliography{references}

@article{cho23,
  author       = {Kyeongmin Cho and
                  Seungmin Jeon and
                  Azalea Raad and
                  Jeehoon Kang},
  title        = {Memento: {A} Framework for Detectable Recoverability in Persistent
                  Memory},
  journal      = {Proc. {ACM} Program. Lang.},
  volume       = {7},
  number       = {{PLDI}},
  pages        = {292--317},
  year         = {2023},
  IGNOREurl          = {https://doi.org/10.1145/3591232},
  doi          = {10.1145/3591232},
  bibsource    = {dblp computer science bibliography, https://dblp.org}
}

@inproceedings{kos16,
  author       = {Eric Koskinen and
                  Junfeng Yang},
  editor       = {Rastislav Bod{\'{\i}}k and
                  Rupak Majumdar},
  title        = {Reducing crash recoverability to reachability},
  booktitle    = {Proceedings of the 43rd Annual {ACM} {SIGPLAN-SIGACT} Symposium on
                  Principles of Programming Languages, {POPL} 2016},
  pages        = {97--108},
  publisher    = {{ACM}}, address = {New York},
  year         = {2016},
  IGNOREurl          = {https://doi.org/10.1145/2837614.2837648},
  doi          = {10.1145/2837614.2837648},
  bibsource    = {dblp computer science bibliography, https://dblp.org}
}

@article{sur23,
  author       = {Milijana Surbatovich and
                  Naomi Spargo and
                  Limin Jia and
                  Brandon Lucia},
  title        = {A Type System for Safe Intermittent Computing},
  journal      = {Proc. {ACM} Program. Lang.},
  volume       = {7},
  number       = {{PLDI}},
  pages        = {736--760},
  year         = {2023},
  IGNOREurl          = {https://doi.org/10.1145/3591250},
  doi          = {10.1145/3591250},
  bibsource    = {dblp computer science bibliography, https://dblp.org}
}

@inproceedings{mur16,
  author    = {Toby C. Murray and
               Robert Sison and
               Edward Pierzchalski and
               Christine Rizkallah},
  title     = {Compositional Verification and Refinement of Concurrent Value-Dependent
               Noninterference},
  booktitle = {{IEEE} 29th Computer Security Foundations Symposium, {CSF} 2016}, 
  pages     = {417--431},
  year      = {2016},
  publisher = {{IEEE} Computer Society}, address = {New York},
     IGNOREurl       = {https://doi.org/10.1109/CSF.2016.36},
  doi       = {10.1109/CSF.2016.36},
  bibsource = {dblp computer science bibliography, https://dblp.org}
}

@inproceedings{van05,
  author       = {Peter Vanbroekhoven and
                  Gerda Janssens and
                  Maurice Bruynooghe and
                  Francky Catthoor},
  editor       = {Kwangkeun Yi},
  title        = {Transformation to Dynamic Single Assignment Using a Simple Data Flow
                  Analysis},
  booktitle    = {Programming Languages and Systems, Third Asian Symposium, {APLAS}
                  2005},
  series       = {Lecture Notes in Computer Science},
  volume       = {3780},
  pages        = {330--346},
  publisher    = {Springer}, address = {Berlin-Heidelberg},
  year         = {2005},
  IGNOREurl          = {https://doi.org/10.1007/11575467\_22},
  doi          = {10.1007/11575467\_22},
  bibsource    = {dblp computer science bibliography, https://dblp.org}
}

@inproceedings{che19,
  author       = {Kevin Cheang and
                  Cameron Rasmussen and
                  Sanjit A. Seshia and
                  Pramod Subramanyan},
  title        = {A Formal Approach to Secure Speculation},
  booktitle    = {32nd {IEEE} Computer Security Foundations Symposium, {CSF} 2019},
  pages        = {288--303},
  publisher    = {{IEEE}}, address = {New York},
  year         = {2019},
  IGNOREurl          = {https://doi.org/10.1109/CSF.2019.00027},
  doi          = {10.1109/CSF.2019.00027},
  bibsource    = {dblp computer science bibliography, https://dblp.org}
}

@article{bal22,
  author       = {Alexandro Baldassin and
                  Jo{\~{a}}o Barreto and
                  Daniel Castro and
                  Paolo Romano},
  title        = {Persistent Memory: {A} Survey of Programming Support and Implementations},
  journal      = {{ACM} Comput. Surv.},
  volume       = {54},
  number       = {7},
  pages        = {152:1--152:37},
  year         = {2022},
  IGNOREurl          = {https://doi.org/10.1145/3465402},
  doi          = {10.1145/3465402},
  bibsource    = {dblp computer science bibliography, https://dblp.org}
}

@article{raa18,
  author       = {Azalea Raad and
                  Viktor Vafeiadis},
  title        = {Persistence semantics for weak memory: integrating epoch persistency
                  with the {TSO} memory model},
  journal      = {Proc. {ACM} Program. Lang.},
  volume       = {2},
  number       = {{OOPSLA}},
  pages        = {137:1--137:27},
  year         = {2018},
  IGNOREurl          = {https://doi.org/10.1145/3276507},
  doi          = {10.1145/3276507},
  bibsource    = {dblp computer science bibliography, https://dblp.org}
}

@article{raa19,
  author       = {Azalea Raad and
                  John Wickerson and
                  Viktor Vafeiadis},
  title        = {Weak persistency semantics from the ground up: formalising the persistency
                  semantics of {ARMv8} and transactional models},
  journal      = {Proc. {ACM} Program. Lang.},
  volume       = {3},
  number       = {{OOPSLA}},
  pages        = {135:1--135:27},
  year         = {2019},
  IGNOREurl          = {https://doi.org/10.1145/3360561},
  doi          = {10.1145/3360561},
  bibsource    = {dblp computer science bibliography, https://dblp.org}
}

@article{raa20a,
  author       = {Azalea Raad and
                  John Wickerson and
                  Gil Neiger and
                  Viktor Vafeiadis},
  title        = {Persistency semantics of the {Intel}-x86 architecture},
  journal      = {Proc. {ACM} Program. Lang.},
  volume       = {4},
  number       = {{POPL}},
  pages        = {11:1--11:31},
  year         = {2020},
  IGNOREurl          = {https://doi.org/10.1145/3371079},
  doi          = {10.1145/3371079},
  bibsource    = {dblp computer science bibliography, https://dblp.org}
}

@article{raa20b,
  author       = {Azalea Raad and
                  Ori Lahav and
                  Viktor Vafeiadis},
  title        = {Persistent {Owicki-Gries} reasoning: a program logic for reasoning about
                  persistent programs on {Intel-x86}},
  journal      = {Proc. {ACM} Program. Lang.},
  volume       = {4},
  number       = {{OOPSLA}},
  pages        = {151:1--151:28},
  year         = {2020},
  IGNOREurl          = {https://doi.org/10.1145/3428219},
  doi          = {10.1145/3428219},
  bibsource    = {dblp computer science bibliography, https://dblp.org}
}

@inproceedings{cho21,
  author       = {Kyeongmin Cho and
                  Sung{-}Hwan Lee and
                  Azalea Raad and
                  Jeehoon Kang},
  editor       = {Stephen N. Freund and
                  Eran Yahav},
  title        = {Revamping hardware persistency models: view-based and axiomatic persistency
                  models for {Intel-x86 and Armv8}},
  booktitle    = {{PLDI} '21: 42nd {ACM} {SIGPLAN} International Conference on Programming
                  Language Design and Implementation},
  pages        = {16--31},
  publisher    = {{ACM}}, address = {New York},
  year         = {2021},
  IGNOREurl          = {https://doi.org/10.1145/3453483.3454027},
  doi          = {10.1145/3453483.3454027},
  bibsource    = {dblp computer science bibliography, https://dblp.org}
}

@inproceedings{bil22a,
  author       = {Eleni Bila and
                  Brijesh Dongol and
                  Ori Lahav and
                  Azalea Raad and
                  John Wickerson},
  editor       = {Ilya Sergey},
  title        = {View-Based {Owicki-Gries} Reasoning for Persistent x86-{TSO}},
  booktitle    = {Programming Languages and Systems - 31st European Symposium on Programming,
                  {ESOP} 2022},
  series       = {Lecture Notes in Computer Science},
  volume       = {13240},
  pages        = {234--261},
  publisher    = {Springer}, address = {Berlin-Heidelberg},
  year         = {2022},
  IGNOREurl          = {https://doi.org/10.1007/978-3-030-99336-8\_9},
  doi          = {10.1007/978-3-030-99336-8_9},
  bibsource    = {dblp computer science bibliography, https://dblp.org}
}

@article{dos23,
  author       = {Emanuele D'Osualdo and
                  Azalea Raad and
                  Viktor Vafeiadis},
  title        = {The Path to Durable Linearizability},
  journal      = {Proc. {ACM} Program. Lang.},
  volume       = {7},
  number       = {{POPL}},
  pages        = {748--774},
  year         = {2023},
  IGNOREurl          = {https://doi.org/10.1145/3571219},
  doi          = {10.1145/3571219},
  bibsource    = {dblp computer science bibliography, https://dblp.org}
}

@inproceedings{der19,
  author       = {John Derrick and
                  Simon Doherty and
                  Brijesh Dongol and
                  Gerhard Schellhorn and
                  Heike Wehrheim},
  editor       = {Maurice H. ter Beek and
                  Annabelle McIver and
                  Jos{\'{e}} N. Oliveira},
  title        = {Verifying Correctness of Persistent Concurrent Data Structures},
  booktitle    = {Formal Methods - The Next 30 Years - Third World Congress, {FM} 2019},
  series       = {Lecture Notes in Computer Science},
  volume       = {11800},
  pages        = {179--195},
  publisher    = {Springer}, address = {Berlin-Heidelberg},
  year         = {2019},
  IGNOREurl          = {https://doi.org/10.1007/978-3-030-30942-8\_12},
  doi          = {10.1007/978-3-030-30942-8_12},
  bibsource    = {dblp computer science bibliography, https://dblp.org}
}

@inproceedings{bil20,
  author       = {Eleni Bila and
                  Simon Doherty and
                  Brijesh Dongol and
                  John Derrick and
                  Gerhard Schellhorn and
                  Heike Wehrheim},
  editor       = {Alexey Gotsman and
                  Ana Sokolova},
  title        = {Defining and Verifying Durable Opacity: Correctness for Persistent
                  Software Transactional Memory},
  booktitle    = {Formal Techniques for Distributed Objects, Components, and Systems
                  - 40th {IFIP} {WG} 6.1 International Conference, {FORTE} 2020},
  series       = {Lecture Notes in Computer Science},
  volume       = {12136},
  pages        = {39--58},
  publisher    = {Springer}, address = {Berlin-Heidelberg},
  year         = {2020},
  IGNOREurl          = {https://doi.org/10.1007/978-3-030-50086-3\_3},
  doi          = {10.1007/978-3-030-50086-3_3},
  bibsource    = {dblp computer science bibliography, https://dblp.org}
}

@article{der21,
  author       = {John Derrick and
                  Simon Doherty and
                  Brijesh Dongol and
                  Gerhard Schellhorn and
                  Heike Wehrheim},
  title        = {Verifying correctness of persistent concurrent data structures: a
                  sound and complete method},
  journal      = {Formal Aspects Comput.},
  volume       = {33},
  number       = {4-5},
  pages        = {547--573},
  year         = {2021},
  IGNOREurl          = {https://doi.org/10.1007/s00165-021-00541-8},
  doi          = {10.1007/S00165-021-00541-8},
  bibsource    = {dblp computer science bibliography, https://dblp.org}
}

@article{bil22b,
  author       = {Eleni Bila and
                  John Derrick and
                  Simon Doherty and
                  Brijesh Dongol and
                  Gerhard Schellhorn and
                  Heike Wehrheim},
  title        = {Modularising Verification Of Durable Opacity},
  journal      = {Log. Methods Comput. Sci.},
  volume       = {18},
  number       = {3},
  year         = {2022},
  IGNOREurl          = {https://doi.org/10.46298/lmcs-18(3:7)2022},
  doi          = {10.46298/LMCS-18(3:7)2022},
  bibsource    = {dblp computer science bibliography, https://dblp.org}
}

@inproceedings{gog82,
  author    = {Joseph A. Goguen and
               Jos{\'{e}} Meseguer},
  title     = {Security Policies and Security Models},
  booktitle = {1982 {IEEE} Symposium on Security and Privacy, 1982},
  pages     = {11--20},
  year      = {1982},
publisher = {{IEEE} Computer Society}, address = {New York},
  IGNOREurl       = {https://doi.org/10.1109/SP.1982.10014},
  doi       = {10.1109/SP.1982.10014},
  bibsource = {dblp computer science bibliography, https://dblp.org}
}

@inproceedings{vau12,
  author    = {Jeffrey A. Vaughan and
               Todd D. Millstein},
  title     = {Secure Information Flow for Concurrent Programs under {Total Store
               Order}},
  booktitle = {25th {IEEE} Computer Security Foundations Symposium, {CSF} 2012}, 
  pages     = {19--29},
  year      = {2012},
  editor    = {Stephen Chong},
  publisher = {{IEEE} Computer Society}, address = {New York},
  IGNOREurl       = {https://doi.org/10.1109/CSF.2012.20},
  doi       = {10.1109/CSF.2012.20},
  bibsource = {dblp computer science bibliography, https://dblp.org}
}

@inproceedings{man14,
  author    = {Heiko Mantel and
               Matthias Perner and
               Jens Sauer},
  title     = {Noninterference under Weak Memory Models},
  booktitle = {{IEEE} 27th Computer Security Foundations Symposium, {CSF} 2014}, 
  pages     = {80--94},
  year      = {2014},
  publisher = {{IEEE} Computer Society}, address = {New York},
  IGNOREurl       = {https://doi.org/10.1109/CSF.2014.14},
  doi       = {10.1109/CSF.2014.14},
  bibsource = {dblp computer science bibliography, https://dblp.org}
}

@book{sor11,
  author    = {Daniel J. Sorin and
               Mark D. Hill and
               David A. Wood},
  title     = {A Primer on Memory Consistency and Cache Coherence},
  series    = {Synthesis Lectures on Computer Architecture},
  publisher = {Morgan {\&} Claypool Publishers}, address = {San Rafael CA},
  year      = {2011},
  IGNOREurl       = {https://doi.org/10.2200/S00346ED1V01Y201104CAC016},
  doi       = {10.2200/S00346ED1V01Y201104CAC016},
  bibsource = {dblp computer science bibliography, https://dblp.org}
}

@inproceedings{col18a,
  author    = {Robert J. Colvin and
               Graeme Smith},
  title     = {A Wide-Spectrum Language for Verification of Programs on Weak Memory
               Models},
  booktitle = {Formal Methods - 22nd International Symposium, {FM} 2018}, 
  pages     = {240--257},
  year      = {2018},
  editor    = {Klaus Havelund and
               Jan Peleska and
               Bill Roscoe and
               Erik P. de Vink},
  series    = {Lecture Notes in Computer Science},
  volume    = {10951},
  publisher = {Springer}, address = {Berlin-Heidelberg},
  IGNOREurl       = {https://doi.org/10.1007/978-3-319-95582-7_14},
  doi       = {10.1007/978-3-319-95582-7_14},
  bibsource = {dblp computer science bibliography, https://dblp.org}
}

@article{col18b,
  author    = {R. J. Colvin and G. Smith},
  title     = {A high-level operational semantics for hardware weak memory models},
  journal   = {CoRR},
  volume    = {abs/1812.00996},
  year      = {2018}
}

@book{dij76,
  author       = {Edsger W. Dijkstra},
  title        = {A Discipline of Programming},
  publisher    = {Prentice-Hall},  address = {Hoboken NJ},
  year         = {1976},
  IGNOREurl          = {https://www.worldcat.org/oclc/01958445},
  isbn         = {013215871X},
  bibsource    = {dblp computer science bibliography, https://dblp.org}
}

@book{dij90,
  author       = {Edsger W. Dijkstra and
                  Carel S. Scholten},
  title        = {Predicate Calculus and Program Semantics},
  series       = {Texts and Monographs in Computer Science},
  publisher    = {Springer}, address = {Berlin-Heidelberg},
  year         = {1990},
  IGNOREurl          = {https://doi.org/10.1007/978-1-4612-3228-5},
  doi          = {10.1007/978-1-4612-3228-5},
  isbn         = {978-3-540-96957-0},
  bibsource    = {dblp computer science bibliography, https://dblp.org}
}

@article{smi21,
  author       = {Graeme Smith and
                  Nicholas Coughlin and
                  Toby Murray},
  title        = {Information-flow control on {ARM} and {POWER} multicore processors},
  journal      = {Formal Methods Syst. Des.},
  volume       = {58},
  number       = {1-2},
  pages        = {251--293},
  year         = {2021},
  IGNOREurl          = {https://doi.org/10.1007/s10703-021-00376-2},
  doi          = {10.1007/S10703-021-00376-2},
  bibsource    = {dblp computer science bibliography, https://dblp.org}
}

@inproceedings{cou21,
  author       = {Nicholas Coughlin and
                  Kirsten Winter and
                  Graeme Smith},
  editor       = {Marieke Huisman and
                  Corina S. Pasareanu and
                  Naijun Zhan},
  title        = {Rely/Guarantee Reasoning for Multicopy Atomic Weak Memory Models},
  booktitle    = {Formal Methods - 24th International Symposium, {FM} 2021},
  series       = {Lecture Notes in Computer Science},
  volume       = {13047},
  pages        = {292--310},
  publisher    = {Springer}, address = {Berlin-Heidelberg},
  year         = {2021},
  IGNOREurl          = {https://doi.org/10.1007/978-3-030-90870-6\_16},
  doi          = {10.1007/978-3-030-90870-6_16},
  bibsource    = {dblp computer science bibliography, https://dblp.org}
}

@article{cou23,
  author       = {Nicholas Coughlin and
                  Kirsten Winter and
                  Graeme Smith},
  title        = {Compositional Reasoning for Non-multicopy Atomic Architectures},
  journal      = {Formal Aspects Comput.},
  volume       = {35},
  number       = {2},
  pages        = {8:1--8:30},
  year         = {2023},
  IGNOREurl          = {https://doi.org/10.1145/3574137},
  doi          = {10.1145/3574137},
  bibsource    = {dblp computer science bibliography, https://dblp.org}
}

@inproceedings{bar05,
  author    = {Michael Barnett and
               K. Rustan M. Leino},
  editor    = {Michael D. Ernst and
               Thomas P. Jensen},
  title     = {Weakest-precondition of unstructured programs},
  booktitle = {Proceedings of the 2005 {ACM} {SIGPLAN-SIGSOFT} Workshop on Program
               Analysis For Software Tools and Engineering, PASTE'05},
  pages     = {82--87},
  publisher = {{ACM}}, address = {New York},
  year      = {2005},
  IGNOREurl       = {https://doi.org/10.1145/1108792.1108813},
  doi       = {10.1145/1108792.1108813},
  bibsource = {dblp computer science bibliography, https://dblp.org}
}

@book{bov05,
  author    = {Daniel P. Bovet and
               Marco Cesati},
  title     = {Understanding the Linux Kernel - from {I/O} ports to process management:
               covers version 2.6 {(3.} ed.)},
  publisher = {O'Reilly}, address = {Sebastopol CA},
  year      = {2005},
  IGNOREurl       = {http://www.oreilly.de/catalog/understandlk/index.html},
  isbn      = {978-0-596-00565-8},
  bibsource = {dblp computer science bibliography, https://dblp.org}
}

@article{xu97,
  author    = {Qiwen Xu and
               Willem P. de Roever and
               Jifeng He},
  title     = {The Rely-Guarantee Method for Verifying Shared Variable Concurrent
               Programs},
  journal   = {Formal Aspects of Computing},
  volume    = {9},
  number    = {2},
  pages     = {149--174},
  year      = {1997},
  IGNOREurl       = {https://doi.org/10.1007/BF01211617},
  doi       = {10.1007/BF01211617},
}

@inproceedings{jon83,
  author       = {Cliff B. Jones},
  editor       = {R. E. A. Mason},
  title        = {Specification and Design of (Parallel) Programs},
  booktitle    = {Information Processing 83, Proceedings of the {IFIP} 9th World Computer
                  Congress},
  pages        = {321--332},
  publisher    = {North-Holland/IFIP}, address = {Amsterdam},
  year         = {1983},
  bibsource    = {dblp computer science bibliography, https://dblp.org}
}

@article{cou22,
  author    = {Nicholas Coughlin and
               Graeme Smith},
  title     = {Compositional noninterference on hardware weak memory models},
  journal   = {Sci. Comput. Program.},
  volume    = {217},
  pages     = {102779},
  year      = {2022},
  IGNOREurl       = {https://doi.org/10.1016/j.scico.2022.102779},
  doi       = {10.1016/j.scico.2022.102779},
  bibsource = {dblp computer science bibliography, https://dblp.org}
}

@article{sab09,
  author    = {Andrei Sabelfeld and
               David Sands},
  title     = {Declassification: Dimensions and principles},
  journal   = {J. Comput. Secur.},
  volume    = {17},
  number    = {5},
  pages     = {517--548},
  year      = {2009},
     IGNOREurl       = {https://doi.org/10.3233/JCS-2009-0352},
  doi       = {10.3233/JCS-2009-0352},
  bibsource = {dblp computer science bibliography, https://dblp.org}
}

@inproceedings{ask07,
  author    = {Aslan Askarov and
               Andrei Sabelfeld},
  editor    = {Michael W. Hicks},
  title     = {Localized delimited release: combining the what and where dimensions
               of information release},
  booktitle = {Proceedings of the 2007 Workshop on Programming Languages and Analysis
               for Security, {PLAS} 2007},
  pages     = {53--60},
  publisher = {{ACM}}, address = {New York},
  year      = {2007},
     IGNOREurl       = {https://doi.org/10.1145/1255329.1255339},
  doi       = {10.1145/1255329.1255339},
  bibsource = {dblp computer science bibliography, https://dblp.org}
}

@inproceedings{mur18,
  author    = {Toby C. Murray and
               Robert Sison and
               Kai Engelhardt},
  title     = {{C{\sc overn}:} {A} Logic for Compositional Verification of Information
               Flow Control},
  booktitle = {2018 {IEEE} European Symposium on Security and Privacy, EuroS{\&}P
               2018}, 
  pages     = {16--30},
  year      = {2018},
  publisher = {{IEEE}}, address = {New York},
     IGNOREurl       = {https://doi.org/10.1109/EuroSP.2018.00010},
  doi       = {10.1109/EuroSP.2018.00010},
  bibsource = {dblp computer science bibliography, https://dblp.org}
}

@inproceedings{smi19,
  author    = {Graeme Smith and
               Nicholas Coughlin and
               Toby Murray},
  title     = {Value-Dependent Information-Flow Security on Weak Memory Models},
  booktitle = {Formal Methods - The Next 30 Years - Third World Congress, {FM} 2019},
  editor    = {Maurice H. ter Beek and
               Annabelle McIver and
               Jos{\'{e}} N. Oliveira},
  series    = {Lecture Notes in Computer Science},
  volume    = {11800},
  publisher = {Springer}, address = {Berlin-Heidelberg},
  pages     = {539--555},
  year      = {2019},
     IGNOREurl       = {https://doi.org/10.1007/978-3-030-30942-8\_32},
  doi       = {10.1007/978-3-030-30942-8_32},
  bibsource = {dblp computer science bibliography, https://dblp.org}
}

@inproceedings{win21,
  author    = {Kirsten Winter and
               Nicholas Coughlin and
               Graeme Smith},
  title     = {Backwards-directed information flow analysis for concurrent programs},
  booktitle = {34th {IEEE} Computer Security Foundations Symposium, {CSF} 2021},
  pages     = {1--16},
  publisher = {{IEEE}}, address = {New York},
  year      = {2021},
     IGNOREurl       = {https://doi.org/10.1109/CSF51468.2021.00017},
  doi       = {10.1109/CSF51468.2021.00017},
  bibsource = {dblp computer science bibliography, https://dblp.org}
}

@inproceedings{smi22,
  author       = {Graeme Smith},
  editor       = {Adri{\'{a}}n Riesco and
                  Min Zhang},
  title        = {Declassification Predicates for Controlled Information Release},
  booktitle    = {Formal Methods and Software Engineering - 23rd International Conference
                  on Formal Engineering Methods, {ICFEM} 2022},
  series       = {Lecture Notes in Computer Science},
  volume       = {13478},
  pages        = {298--315},
  publisher    = {Springer}, address = {Berlin-Heidelberg},
  year         = {2022},
  IGNOREurl          = {https://doi.org/10.1007/978-3-031-17244-1\_18},
  doi          = {10.1007/978-3-031-17244-1_18},
  bibsource    = {dblp computer science bibliography, https://dblp.org}
}

@inproceedings{kan17,
  author       = {Jeehoon Kang and
                  Chung{-}Kil Hur and
                  Ori Lahav and
                  Viktor Vafeiadis and
                  Derek Dreyer},
  editor       = {Giuseppe Castagna and
                  Andrew D. Gordon},
  title        = {A promising semantics for relaxed-memory concurrency},
  booktitle    = {Proceedings of the 44th {ACM} {SIGPLAN} Symposium on Principles of
                  Programming Languages, {POPL} 2017},
  pages        = {175--189},
  publisher    = {{ACM}}, address = {New York},
  year         = {2017},
  IGNOREurl          = {https://doi.org/10.1145/3009837.3009850},
  doi          = {10.1145/3009837.3009850},
  bibsource    = {dblp computer science bibliography, https://dblp.org}
}

@inproceedings{izr16,
  author       = {Joseph Izraelevitz and
                  Hammurabi Mendes and
                  Michael L. Scott},
  editor       = {Cyril Gavoille and
                  David Ilcinkas},
  title        = {Linearizability of Persistent Memory Objects Under a Full-System-Crash
                  Failure Model},
  booktitle    = {Distributed Computing - 30th International Symposium, {DISC} 2016},
  series       = {Lecture Notes in Computer Science},
  volume       = {9888},
  pages        = {313--327},
  publisher    = {Springer}, address = {Berlin-Heidelberg},
  year         = {2016},
  IGNOREurl          = {https://doi.org/10.1007/978-3-662-53426-7\_23},
  doi          = {10.1007/978-3-662-53426-7_23},
  bibsource    = {dblp computer science bibliography, https://dblp.org}
}

@inproceedings{pel14,
  author       = {Steven Pelley and
                  Peter M. Chen and
                  Thomas F. Wenisch},
  title        = {Memory persistency},
  booktitle    = {{ACM/IEEE} 41st International Symposium on Computer Architecture,
                  {ISCA} 2014},
  pages        = {265--276},
  publisher    = {{IEEE} Computer Society}, address = {New York},
  year         = {2014},
  IGNOREurl          = {https://doi.org/10.1109/ISCA.2014.6853222},
  doi          = {10.1109/ISCA.2014.6853222},
  bibsource    = {dblp computer science bibliography, https://dblp.org}
}

@article{khy21,
  author       = {Artem Khyzha and
                  Ori Lahav},
  title        = {Taming x86-{TSO} persistency},
  journal      = {Proc. {ACM} Program. Lang.},
  volume       = {5},
  number       = {{POPL}},
  pages        = {1--29},
  year         = {2021},
  IGNOREurl          = {https://doi.org/10.1145/3434328},
  doi          = {10.1145/3434328},
  bibsource    = {dblp computer science bibliography, https://dblp.org}
}

@inproceedings{cou24,
  author       = {Nicholas Coughlin and
                  Kait Lam and
                  Graeme Smith and
                  Kirsten Winter},
  editor       = {Andr{\'{e}} Platzer and
                  Kristin Yvonne Rozier and
                  Matteo Pradella and
                  Matteo Rossi},
  title        = {Detecting Speculative Execution Vulnerabilities on Weak Memory Models},
  booktitle    = {Formal Methods - 26th International Symposium, {FM} 2024},
  series       = {Lecture Notes in Computer Science},
  volume       = {14933},
  pages        = {482--500},
  publisher    = {Springer}, address = {Berlin-Heidelberg},
  year         = {2024},
  IGNOREurl          = {https://doi.org/10.1007/978-3-031-71162-6\_25},
  doi          = {10.1007/978-3-031-71162-6_25},
  bibsource    = {dblp computer science bibliography, https://dblp.org}
}

@inproceedings{kil73,
 author = {G.~A.~Kildall},
 title = {A unified approach to global program optimization},
 booktitle = {Proc. of POPL },
 year = {1973},
 pages = {194--206},
 publisher = {ACM}, address = {New York},
 numpages = {13},
}

@article{sew10,
  author    = {Peter Sewell and
               Susmit Sarkar and
               Scott Owens and
               Francesco Zappa Nardelli and
               Magnus O. Myreen},
  title     = {x86-{TSO}: a rigorous and usable programmer's model for x86 multiprocessors},
  journal   = {Commun. {ACM}},
  volume    = {53},
  number    = {7},
  pages     = {89--97},
  year      = {2010},
  IGNOREurl       = {http://doi.acm.org/10.1145/1785414.1785443},
  doi       = {10.1145/1785414.1785443},
  bibsource = {dblp computer science bibliography, https://dblp.org}
}

@article{alg14,
  author    = {Jade Alglave and
               Luc Maranget and
               Michael Tautschnig},
  title     = {Herding Cats: Modelling, Simulation, Testing, and Data Mining for
               Weak Memory},
  journal   = {{ACM} Trans. Program. Lang. Syst.},
  volume    = {36},
  number    = {2},
  pages     = {7:1--7:74},
  year      = {2014},
  IGNOREurl       = {http://doi.acm.org/10.1145/2627752},
  doi       = {10.1145/2627752},
  bibsource = {dblp computer science bibliography, https://dblp.org}
}

@inproceedings{col21,
  author       = {Robert J. Colvin},
  editor       = {Radu Calinescu and
                  Corina S. Pasareanu},
  title        = {Parallelized Sequential Composition and Hardware Weak Memory Models},
  booktitle    = {Software Engineering and Formal Methods - 19th International Conference,
                  {SEFM} 2021},
  series       = {Lecture Notes in Computer Science},
  volume       = {13085},
  pages        = {201--221},
  publisher    = {Springer}, address = {Berlin-Heidelberg},
  year         = {2021},
  IGNOREurl          = {https://doi.org/10.1007/978-3-030-92124-8\_12},
  doi          = {10.1007/978-3-030-92124-8_12},
  bibsource    = {dblp computer science bibliography, https://dblp.org}
}

@article{owi76,
  author       = {Susan S. Owicki and
                  David Gries},
  title        = {An Axiomatic Proof Technique for Parallel Programs {I}},
  journal      = {Acta Informatica},
  volume       = {6},
  pages        = {319--340},
  year         = {1976},
  IGNOREurl          = {https://doi.org/10.1007/BF00268134},
  doi          = {10.1007/BF00268134},
  bibsource    = {dblp computer science bibliography, https://dblp.org}
}

@article{her90,
  author       = {Maurice Herlihy and
                  Jeannette M. Wing},
  title        = {Linearizability: {A} Correctness Condition for Concurrent Objects},
  journal      = {{ACM} Trans. Program. Lang. Syst.},
  volume       = {12},
  number       = {3},
  pages        = {463--492},
  year         = {1990},
  IGNOREurl          = {https://doi.org/10.1145/78969.78972},
  doi          = {10.1145/78969.78972},
  bibsource    = {dblp computer science bibliography, https://dblp.org}
}

@inproceedings{gue08,
  author       = {Rachid Guerraoui and
                  Michal Kapalka},
  editor       = {Siddhartha Chatterjee and
                  Michael L. Scott},
  title        = {On the correctness of transactional memory},
  booktitle    = {Proceedings of the 13th {ACM} {SIGPLAN} Symposium on Principles and
                  Practice of Parallel Programming, {PPOPP} 2008},
  pages        = {175--184},
  publisher    = {{ACM}}, address = {New York},
  year         = {2008},
  IGNOREurl          = {https://doi.org/10.1145/1345206.1345233},
  doi          = {10.1145/1345206.1345233},
  bibsource    = {dblp computer science bibliography, https://dblp.org}
}

\end{document}